%
%
%
%
%
%
%
%
\documentclass[10pt,twoside]{amsart}
\usepackage{enumerate}
\usepackage{graphicx}
\usepackage{amsfonts}
\usepackage{graphicx}
\usepackage{amsmath}
\usepackage{amsxtra}
\usepackage{amssymb}
\usepackage[latin1]{inputenc}
\usepackage{dsfont}

%
%
%
\def\YEAR{\year}\newcount\VOL\VOL=\YEAR\advance\VOL by-1995
\def\firstpage{235}\def\lastpage{257}
\def\received{November 18, 2008}\def\revised{March 11, 2009}
\def\communicated{Heinz Siedentop}

\makeatletter
\def\magnification{\afterassignment\m@g\count@}
\def\m@g{\mag=\count@\hsize6.5truein\vsize8.9truein\dimen\footins8truein}
\makeatother

\oddsidemargin1.91cm\evensidemargin1.91cm\voffset1.4cm

\textwidth12.0cm\textheight19.0cm

\font\eightrm=cmr8
\font\caps=cmcsc10                    
\font\Caps=cmcsc10 scaled \magstep1   
\font\scaps=cmcsc8

%


\pagestyle{myheadings}
\pagenumbering{arabic}
\setcounter{page}{\firstpage}

\makeatletter
\setlength\topmargin {14\p@}
\setlength\headsep   {15\p@}
\setlength\footskip  {25\p@}
\setlength\parindent {20\p@}
\@specialpagefalse\headheight=8.5pt
\def\DocMath{}
\def\DocMath{{\def\th{\thinspace}\scaps Documenta Math.}}
\renewcommand{\@oddfoot}{\hfill\scaps Documenta Mathematica 
    \number\VOL\  (\number\YEAR) \number\firstpage--\lastpage\hfill}
\renewcommand{\@evenfoot}{\ifnum\thepage>\lastpage\hfill\scaps
    Documenta Mathematica \number\VOL\  (\number\YEAR)\hfill\else\@oddfoot\fi}%

\renewcommand{\@evenhead}{%
    \ifnum\thepage>\lastpage\rlap{\thepage}\hfill%
    \else\rlap{\thepage}\slshape\leftmark\hfill{\caps\SAuthor}\hfill\fi}%
\renewcommand{\@oddhead}{%
    \ifnum\thepage=\firstpage{\DocMath\hfill\llap{\thepage}}%
    \else{\slshape\rightmark}\hfill{\caps\STitle}\hfill\llap{\thepage}\fi}%
\makeatother

\def\TSkip{\bigskip}
\newbox\TheTitle{\obeylines\gdef\GetTitle #1
\ShortTitle  #2
\SubTitle    #3
\Author      #4
\ShortAuthor #5
\EndTitle
{\setbox\TheTitle=\vbox{\baselineskip=20pt\let\par=\cr\obeylines%
\halign{\centerline{\Caps##}\cr\noalign{\medskip}\cr#1\cr}}%
	\copy\TheTitle\TSkip\TSkip%
\def\next{#2}\ifx\next\empty\gdef\STitle{#1}\else\gdef\STitle{#2}\fi%
\def\next{#3}\ifx\next\empty%
    \else\setbox\TheTitle=\vbox{\baselineskip=20pt\let\par=\cr\obeylines%
    \halign{\centerline{\caps##} #3\cr}}\copy\TheTitle\TSkip\TSkip\fi%
\centerline{\caps #4}\TSkip\TSkip%
\def\next{#5}\ifx\next\empty\gdef\SAuthor{#4}\else\gdef\SAuthor{#5}\fi%
\ifx\received\empty\relax
    \else\centerline{\eightrm Received: \received}\fi%
\ifx\revised\empty\TSkip%
    \else\centerline{\eightrm Revised: \revised}\TSkip\fi%
\ifx\communicated\empty\relax
    \else\centerline{\eightrm Communicated by \communicated}\fi\TSkip\TSkip%
\catcode'015=5}}\def\Title{\obeylines\GetTitle}
\def\Abstract{\begingroup\narrower
    \parskip=\medskipamount\parindent=0pt{\caps Abstract. }}
\def\EndAbstract{\par\endgroup\TSkip}

\long\def\MSC#1\EndMSC{\def\arg{#1}\ifx\arg\empty\relax\else
     {\par\narrower\noindent%
     2000 Mathematics Subject Classification: #1\par}\fi}

\long\def\KEY#1\EndKEY{\def\arg{#1}\ifx\arg\empty\relax\else
	{\par\narrower\noindent Keywords and Phrases: #1\par}\fi\TSkip}

\newbox\TheAdd\def\Addresses{\vfill\copy\TheAdd\vfill
    \ifodd\number\lastpage\vfill\eject\phantom{.}\vfill\eject\fi}
{\obeylines\gdef\GetAddress #1
\Address #2
\Address #3
\Address #4
\EndAddress
{\def\xs{6.0truecm}\parindent=0pt
\setbox0=\vtop{{\obeylines\hsize=\xs#1\par}}\def\next{#2}
\ifx\next\empty 
     \setbox\TheAdd=\hbox to\hsize{\hfill\copy0\hfill}
\else\setbox1=\vtop{{\obeylines\hsize=\xs#2\par}}\def\next{#3}
\ifx\next\empty 
     \setbox\TheAdd=\hbox to\hsize{\hfill\copy0\hfill\copy1\hfill}
\else\setbox2=\vtop{{\obeylines\hsize=\xs#3\par}}\def\next{#4}
\ifx\next\empty\ 
     \setbox\TheAdd=\vtop{\hbox to\hsize{\hfill\copy0\hfill\copy1\hfill}
                \vskip20pt\hbox to\hsize{\hfill\copy2\hfill}}
\else\setbox3=\vtop{{\obeylines\hsize=\xs#4\par}}
     \setbox\TheAdd=\vtop{\hbox to\hsize{\hfill\copy0\hfill\copy1\hfill}
	        \vskip20pt\hbox to\hsize{\hfill\copy2\hfill\copy3\hfill}}
\fi\fi\fi\catcode'015=5}}\gdef\Address{\obeylines\GetAddress}

\hfuzz=0.1pt\tolerance=2000\emergencystretch=20pt\overfullrule=0pt
\begin{document}
\Title
The Allegretto-Piepenbrink Theorem 
for Strongly Local Dirichlet Forms
\ShortTitle
The Allegretto-Piepenbrink Theorem
\SubTitle
\textit{Dedicated to J\"urgen Voigt in celebration of his 65th birthday}

\Author
Daniel Lenz, Peter Stollmann, Ivan Veseli\'c
\ShortAuthor
D. Lenz, P. Stollmann, I. Veseli\'c
\EndTitle

\Abstract
The existence of positive weak solutions is related to spectral information on the corresponding
partial differential operator.

\EndAbstract
\MSC
35P05, 81Q10

\EndMSC
\KEY
\EndKEY
\Address
Daniel Lenz
Mathematisches Institut
Friedrich-Schiller 
Universit\"at Jena
Ernst-Abb\'{e} Platz 2
07743 Jena
Germany
{\small
daniel.lenz@uni-jena.de
http://www.analysis-lenz.uni-jena.de/}

\Address
Peter Stollmann,
Fakult\"at f\"ur Mathematik
           TU Chemnitz
           09107 Chemnitz
           Germany
{\small
 stollman@mathematik.tu-chemnitz.de}

\Address
Ivan Veseli\'{c}
Emmy-Noether-Programme 
of the DFG\, \& 
Fakult\"at f\"ur Mathematik
TU Chemnitz 
09107\ Chemnitz
Germany
{\small
http://www.tu-chemnitz.de/mathematik/enp/}

\Address
\EndAddress
\overfullrule=5pt

\newcommand{\ZZ}{{\mathbb Z}}
\newcommand{\RR}{{\mathbb R}}
\newcommand{\R}{{\mathbb R}}

\newcommand{\CC}{{\mathbb C}}
\newcommand{\NN}{{\mathbb N}}
\newcommand{\N}{{\mathbb N}}

\newcommand{\TT}{{\mathbb T}}
\newcommand{\KK}{{\mathbb K}}
\newcommand{\supp}{\mathrm{supp}\,}
\providecommand{\C}[1]{\mathcal{#1}}
\newcommand{\aaa}{\C{E}}
\newcommand{\cE}{\C{E}}
\newcommand{\DD}{\C{D}}
\newcommand{\cD}{\C{D}}
\newcommand{\capp}{\mathrm{cap}}
\newcommand{\dist}{\mathrm{dist}}
\newcommand{\vol}{\mathrm{vol}}
\newcommand{\phim}{\Phi^{-1}}

\sloppy

\newtheorem{theorem}{Theorem}[section]
\newtheorem{acknowledgement}[theorem]{Acknowledgement}
\newtheorem{algorithm}[theorem]{Algorithm}
\newtheorem{axiom}[theorem]{Axiom}
\newtheorem{case}[theorem]{Case}
\newtheorem{claim}[theorem]{Claim}
\newtheorem{conclusion}[theorem]{Conclusion}
\newtheorem{condition}[theorem]{Condition}
\newtheorem{conjecture}[theorem]{Conjecture}
\newtheorem{corollary}[theorem]{Corollary}
\newtheorem{criterion}[theorem]{Criterion}
\newtheorem{definition}[theorem]{Definition}
\newtheorem{example}[theorem]{Example}
\newtheorem{exercise}[theorem]{Exercise}
\newtheorem{lemma}[theorem]{Lemma}
\newtheorem{notation}[theorem]{Notation}
\newtheorem{problem}[theorem]{Problem}
\newtheorem{proposition}[theorem]{Proposition}
\newtheorem{remark}[theorem]{Remark}
\newtheorem{solution}[theorem]{Solution}
\newtheorem{summary}[theorem]{Summary}

\section*{Introduction}
The Allegretto-Piepenbrink theorem relates solutions and spectra of 2nd order partial differential operators $H$ and has quite some history,
cf.~ \cite{Agmon-83,Allegretto-74, Allegretto-79, Allegretto-81,MossP-78,Piepenbrink-74,Piepenbrink-77, Simon-82, Pinchover-07,Pinsky-95}.

One way to phrase it is that the supremum of those real $E$   for which a nontrivial positive solution of $H\Phi=E\Phi$ exists coincides with the infimum of the spectrum of $H$. In noncompact cases this can be sharpened in the sense that nontrivial positive solutions of the above equation exist for all $E\le\inf\sigma(H)$.

In the present paper we consider the Allegretto-Piepenbrink theorem in a
general setting in the sense that the coefficients that are allowed may be
very singular. In fact, we regard $H=H_0+\nu$, where $H_0$ is the generator of
a strongly local Dirichlet form and $\nu$ is a suitable measure perturbation.
Let us stress, however, that one main motivation for the present work is the
conceptual simplicity that goes along with the generalisation.

The Allegretto-Piepenbrink theorem as stated above consists of two statements:
the first one is the fact that positive solutions can only exist for $E$ below
the spectrum. Turned around this means that the existence of a nontrivial
positive solution of $H\Phi=E\Phi$ implies that $H\ge E$. For a strong enough
notion of positivity, this comes from a ``ground state transformation''. We
present this simple extension of known classical results in Section 2, after
introducing the necessary set-up in Section 1. For the ground state
transformation not much structure is needed.

For the converse statement, the existence of positive solutions below
$\sigma(H)$, we need more properties of $H$ and the underlying space:
noncompactness, irreducibility and what we call a Harnack principle. All these
analytic properties are well established in the classical case. Given these
tools, we prove this part of the Allegretto-Piepenbrink theorem in Section 3
with arguments reminiscent of the corresponding discussion in
\cite{Simon-82,CyconFKS-87}. For somewhat complementary results we refer to
\cite{BoutetdeMonvelLS-08} where it is shown that existence of a nontrivial
subexponentially bounded solution of $H\Phi=E\Phi$ yields that
$E\in\sigma(H)$. This implies, in particular, that the positive solutions we
construct for energies below the spectrum cannot behave to well near infinity.
We dedicate this paper to J\"urgen Voigt - teacher, collaborator and friend -
in deep gratitude and wish him many more years of fun in analysis.

\section{Basics and notation concerning strongly local Dirichlet forms and measure perturbations}
\subsection*{Dirichlet forms}
We will now describe the set-up;
we refer to \cite{Fukushima-80} as the
classical standard reference as well as
\cite{BouleauH-91,Davies-90b,FukushimaOT-94,MaR-92} for literature on
Dirichlet forms. Let us emphasize that in contrast to most of the work done on
Dirichlet forms, we treat real and complex function
spaces at the same time and write $\KK$ to denote either $\RR$ or $\CC$.

Throughout we will work with a locally compact, separable metric space
$X$ endowed with a positive Radon measure $m$ with $ \supp m=X$.

The central object of our studies is a regular
Dirichlet form $\mathcal{E}$ with domain $\mathcal{D}$ in $L^2(X)$ and
the selfadjoint operator $H_0$ associated with $\mathcal{E}$.
Let us recall the basic terminology of Dirichlet forms:
Consider a dense subspace $\mathcal{D} \subset L^2(X,m)$  and a
sesquilinear and non-negative map $\mathcal{E}\colon\mathcal{D} \times \mathcal{D} \rightarrow \KK$
such that $\mathcal{D}$ is closed with respect to the energy
norm $\|\,\cdot\,\|_\mathcal{E}$, given by
$$
\|u\|_\mathcal{E}^2=\mathcal{E}[u,u] +\| u\|_{L^2(X,m)}^2,
$$
in which case one speaks of a \textit{closed form} in $L^2(X,m)$. In the sequel we will write
$$\mathcal{E}[u]:= \mathcal{E} [u,u]. $$
  The selfadjoint operator $H_0$
associated with $\mathcal{E}$ is then characterized by
$$
D(H_0)\subset \mathcal{D} \ \mbox{and } \mathcal{E}[f,v]=(H_0f\mid
v)\quad (f\in D(H_0), v\in \mathcal{D}).
$$
Such a closed form is said to be a \textit{Dirichlet form} if
$\mathcal{D}$ is stable under certain pointwise operations; more
precisely, $T:\KK\to\KK$ is called a \emph{normal contraction} if
$T(0)=0$ and $|T(\xi)-T(\zeta)|\le |\xi -\zeta|$ for any
$\xi,\zeta\in\KK$ and we require that for any $u\in \mathcal{D}$ also
$$
T\circ u\in \mathcal{D}\mbox{ and }\mathcal{E}[T\circ u]\le
\mathcal{E}[u].
$$
Here we used the original condition from \cite{BeurlingD-59} that
applies in the real and the complex case at the same time. Today,
particularly in the real case, it is mostly expressed in an equivalent
but formally weaker statement involving $u\vee 0$ and $u\wedge 1$, see
\cite{Fukushima-80}, Thm. 1.4.1 and  \cite{MaR-92}, Section I.4.

A Dirichlet form is called \textit{regular} if $\mathcal{D} \cap
C_c(X)$ is large enough so that it is dense both in $(\mathcal{D},\| \cdot \|_\mathcal{E})$ and
$(C_c(X),\| \cdot \|_{\infty })$, where $C_c(X)$ denotes the space of
continuous functions with compact support.
\subsection*{Capacity}
Due to regularity, we find a set function, the \emph{capacity} that allows to measure the size of sets in a way that is adapted to the form $\aaa$:
For $U\subset X$, $U$ open,
$$
\capp(U):=\inf\{\| v \|^2_\mathcal{E} \mid v\in\DD, \chi_U\le v\} , (\inf\emptyset =\infty ),
$$
and
$$
\capp(A):=\inf\{ \capp(U) \mid A\subset U\}
$$
(see \cite{Fukushima-80}, p. 61f.).  We say that a property holds
\emph{quasi-everywhere}, short \emph{q.e.}, if it holds outside a set of
capacity $0$. A function $f:X\to\KK$ is said to be \emph{quasi-continuous},
\emph{q.c.} for short, if, for any $\varepsilon >0$ there is an open set
$U\subset X$ with $\capp(U)\le \varepsilon$ so that the restriction of $f$ to
$X\setminus U$ is continuous.

A fundamental result in the theory of Dirichlet forms says that every
$u\in\DD$ admits a q.c. representative $\tilde{u}\in u$ (recall that $u\in
L^2(X,m)$ is an equivalence class of functions) and that two such q.c.
representatives agree q.e. Moreover, for every Cauchy sequence $(u_n)$ in
$(\mathcal{D},\| \cdot \|_\mathcal{E})$ there is a subsequence $(u_{n_k})$
such that the $(\tilde{u}_{n_k})$ converge q.e. (see \cite{Fukushima-80},
p.64f).
\subsection*{Measure perturbations}

We will be dealing with Schr\"odinger type operators, i.e.,
perturbations $H=H_0+V$ for suitable potentials $V$. In fact, we can
even include measures as potentials. Here, we follow the approach from   \cite{Stollmann-92,StollmannV-96}. Measure perturbations have been regarded
by a number of authors in different contexts, see e.g.
\cite{AlbeverioM-91,Hansen-99a,Sturm-94b} and the references there.

We denote by
$\mathcal{M}_{R}(U)$ the signed Radon measures on the open subset $U$ of $X$
and by $\mathcal{M}_{R,0}(U)$ the subset of measures
$\nu$ that do not charge sets of capacity
$0$, i.e., those measures with $\nu(B)=0$ for every Borel set $B$ with
$\capp(B)=0$. In case that $\nu=\nu_+-\nu_-\in \mathcal{M}_{R,0}(X)$ we can define
$$
\nu[u,v]=\int_X\tilde{u}\overline{\tilde{v}}d\nu\mbox{  for  }u,v\in\DD
\mbox{  with  }\tilde{u},\tilde{v}\in L^2(X,\nu_++\nu_-) .
$$
We
 have to rely
upon more restrictive assumptions concerning the negative part $\nu_-$
of our measure perturbation. We write $\mathcal{M}_{R,1}$ for those
measures $\nu\in \mathcal{M}_{R}(X)$ that are $\aaa$-bounded with bound less than one; i.e.
measures $\nu$ for which there is a $\kappa<1$ and a $c_\kappa$ such that
$$
\nu[u,u]\le \kappa\aaa[u] + c_\kappa\| u\|^2 .
$$
The set $\mathcal{M}_{R,1}$ can easily be seen to be a subset of
$\mathcal{M}_{R,0}$. We write $\nu \in \mathcal{M}_{R,0}- \mathcal{M}_{R,1}$
if the positive part $\nu_+$ of the measure is in $\mathcal{M}_{R,0}$ and the negative
$\nu_-$ is in $\mathcal{M}_{R,1}$.

By the KLMN theorem (see \cite{ReedS-75}, p. 167), the sum
$\aaa+\nu$ given  by $D(\aaa+\nu)=\{u\in\DD\mid \tilde{u}\in L^2(X,\nu_+)\}$
is closed and densely defined (in fact $\mathcal{D} \cap
C_c(X)\subset D(\aaa+\nu)$) for
$\nu \in \mathcal{M}_{R,0}- \mathcal{M}_{R,1}$.
We denote the associated selfadjoint operator by $H_0+\nu$.
An important special case is given by $\nu=Vdm$ with $V\in L^1_{\text{loc}}(X)$. As done in various papers, one can allow for more singular measures, a direction we are not going to explore here due to the technicalities involved.

\subsection*{Approximation and Regularity}
By assumption the Dirichlet form $(\cE, \cD)$ is regular.
We show now that this property carries over to the perturbed form
$(\cE+\nu,D(\cE+\nu))$. Along the way we prove an approximation
result which will be useful in the context of Theorem  \ref{t:transformation}.
It will be convenient to introduce a notation for the natural norm
in $D(\cE+\nu)$. For all $\psi \in D(\cE+\nu)$ we define
\[
 \|\psi \|_{\cE+\nu}^2  := \|\psi \|_{\cE}^2 + \nu_+(\psi,\psi) \, .
\]

\begin{lemma} \label{l:approximation}
Let $\nu\in\ \mathcal{M}_{R,0}-\mathcal{M}_{R,1}$, and $\cE$ and $\cE+ \nu$ be as above.
Then
\begin{enumerate}[(a)]
 \item
For each $u \in D(\cE+\nu)$ there exists a sequence $(u_n)$ in $\cD \cap L_c^\infty(X)$ such that
$|u_n | \le |u|$ for all $n\in \NN$ and $\|u-u_n\|_{\cE+ \nu}\to 0$ for $n \to \infty$.
\item
For any $ v \in \cD \cap L_c^\infty(X)$ with $v\geq 0$  and any  $\eta \in \cD\cap C_c (X)$ with $\eta \equiv 1$ on the support of $v$  there exists
a sequence $(\phi_n)$ in $\cD\cap C_c (X)$ with $\phi_n \to v$ in $(D(\cE+\nu),\|\cdot\|_{\cE+\nu})$
and $0\le v, \phi_n \le \eta$ for all $n\in \NN$.
\end{enumerate}
In particular, $\cD \cap C_c(X)$ is dense in $(D(\cE+\nu),\|\cdot\|_{\cE+ \nu})$
and the form $(\cE+\nu,D(\cE+\nu))$ is regular.
\end{lemma}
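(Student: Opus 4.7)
The plan is to build approximants in three stages: first truncate $u$ in amplitude, then cut off in support, and finally replace the resulting bounded compactly supported element of $\cD$ by a continuous representative using regularity of $\cE$. Throughout, $\nu$-convergence is handled by splitting into $\nu_\pm$: the $\nu_-$-norm is controlled by the $\cE$-norm via the KLMN bound $\nu_-[w] \le \kappa\, \cE[w] + c_\kappa \|w\|^2$, while $\nu_+$-convergence comes from Lebesgue's dominated convergence theorem after passing to a q.e.\ convergent subsequence.

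For part (a), apply the normal contractions $T_k(\zeta) := \zeta \min(1, k/|\zeta|)$ (with $T_k(0)=0$) to obtain $T_k \circ u \in \cD$ with $|T_k \circ u| \le |u|$ and $\cE[T_k\circ u - u] \to 0$ by standard Beurling--Deny arguments. Then choose an exhausting sequence $\eta_j \in \cD \cap C_c(X)$, $0 \le \eta_j \le 1$, $\eta_j \to 1$ locally uniformly (available by regularity of $\cE$), and set $u_{k,j} := \eta_j \cdot (T_k \circ u)$. Since $\cD \cap L^\infty(X)$ is an algebra for regular Dirichlet forms, $u_{k,j} \in \cD \cap L_c^\infty(X)$, and trivially $|u_{k,j}| \le |u|$. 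Standard truncation and cutoff estimates give $u_{k,j} \to u$ in $\cE$-norm; the $\nu_-$-convergence is then automatic from KLMN, whereas $\nu_+$-convergence follows from dominated convergence with envelope $2|\tilde u| \in L^2(\nu_+)$ along a q.e.\ convergent subsequence. A diagonal extraction yields the desired sequence $(u_n)$.

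For part (b), regularity of $\cE$ provides $\psi_n \in \cD \cap C_c(X)$ with $\psi_n \to v$ in $\cE$-norm; set $\phi_n := (\psi_n \vee 0) \wedge \eta$. As a composition of normal contractions with fixed members of $\cD$, this keeps $\phi_n \in \cD \cap C_c(X)$ with $0 \le \phi_n \le \eta$. Since $\eta \equiv 1$ on $\supp v$ (and we may without loss assume $v \le \eta$ throughout, using that $v$ is to be $\le \eta$ as in the statement), the clamping fixes $v$ and is $\cE$-contractive, so $\phi_n \to v$ still holds in $\cE$-norm. The $\nu_-$-part is controlled by $\cE$ as before; for $\nu_+$, compactness of $\supp \eta$ together with $\nu_+$ being Radon yields $\nu_+(\supp \eta) < \infty$, and the uniform q.e.\ bound $\tilde\phi_n, \tilde v \le \eta$ permits dominated convergence on a q.e.\ convergent subsequence.

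The final density claim combines (a) and (b): first approximate $u \in D(\cE+\nu)$ by $u_n \in \cD \cap L_c^\infty(X)$ via (a); then decompose each $u_n$ into its real and imaginary, positive and negative parts (each in $\cD \cap L_c^\infty(X)$ by the normal-contraction stability of $\cD$), pick $\eta_n \in \cD \cap C_c(X)$ with $\eta_n \equiv 1$ on $\supp u_n$ and $\eta_n \ge \|u_n\|_\infty$ thereon (regularity of $\cE$), and apply (b) to each of the at most four non-negative pieces. Regularity of $(\cE+\nu, D(\cE+\nu))$ is then immediate, since $\|\cdot\|_\infty$-density of $\cD \cap C_c(X)$ in $C_c(X)$ is inherited from the regularity of $\cE$. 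The main obstacle throughout is the $\nu_+$-convergence: $\nu_+$ need not be $\cE$-bounded, and the only input is $\tilde u \in L^2(\nu_+)$; this forces the entire construction to preserve a pointwise q.e.\ dominating function at every stage---namely $|u|$ in (a) and $\eta$ in (b)---which is what makes Lebesgue's theorem applicable and explains the shape of the hypotheses.
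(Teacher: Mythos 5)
Both halves of your construction hinge on norm-convergence claims that the Dirichlet-form axioms do not provide, and these are exactly the points the paper's proof is engineered to avoid. In (b), you assert that $\phi_n=(\psi_n\vee 0)\wedge\eta\to v$ in $\cE$-norm because the clamping ``is $\cE$-contractive''. The contraction property only gives $\cE[T\circ w]\le\cE[w]$ for a single function; it does not give $\cE[T\circ\psi_n-T\circ v]\le\cE[\psi_n-v]$, and your clamping is not even a fixed normal contraction since it depends on the element $\eta\in\cD$ (already for the classical Dirichlet integral, $\nabla(T\psi_n-Tv)$ is not pointwise dominated by $|\nabla(\psi_n-v)|$). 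Continuity of such lattice operations in form norm is a nontrivial matter for general regular Dirichlet forms; all one gets for free is boundedness of $\cE[\phi_n]$. The paper therefore proves only that $(\phi_n)$ is bounded in $\|\cdot\|_{\cE+\nu}$ and converges to $\tilde v$ q.e., and then uses convexity of $C=\{\phi\in\cD\cap C_c(X):0\le\phi\le\eta\}$, so that the weak and norm closures of $C$ in $D(\cE+\nu)$ coincide and $v$ is a norm limit of elements of $C$. Without this (or an equivalent Banach--Saks/Mazur argument) your part (b) has a genuine gap.

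In (a), the step $\eta_j\,(T_k\circ u)\to T_k\circ u$ in $\cE$-norm is not a ``standard cutoff estimate'' in the generality of this lemma. Controlling $\cE[(1-\eta_j)w]$ requires Leibniz-type bounds together with cutoffs of uniformly controlled energy; in this paper such cutoffs only appear later (Lemma \ref{cutoff}), under the additional hypothesis of strict locality via the intrinsic metric, which is not assumed here (the lemma is stated for an arbitrary regular Dirichlet form, not even a local one). The general product estimate $\cE[uv]^{1/2}\le\|u\|_\infty\cE[v]^{1/2}+\|v\|_\infty\cE[u]^{1/2}$ gives no smallness since $\|1-\eta_j\|_\infty=1$ off a compact set. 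The paper sidesteps truncation and cutoff altogether: with $u\ge0$ and $\phi_n\in\cD\cap C_c(X)$, $\phi_n\ge0$, $\phi_n\to u$ by regularity, it sets $u_n:=\phi_n\wedge u$, which is automatically in $\cD\cap L^\infty_c(X)$ with $0\le u_n\le u$, and $u_n-u=(\phi_n-u)\wedge 0$ is a normal contraction of $\phi_n-u$, so $\cE$-convergence is immediate; the measure part then follows, as in your proposal, by q.e.-convergent subsequences and dominated convergence (your KLMN handling of $\nu_-$ is fine). If you replace your truncate-and-cutoff construction in (a) by this one, and the norm-convergence claim in (b) by the boundedness-plus-convexity argument, your outline becomes the paper's proof.
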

Note that $\cD \cap L_c^\infty(X)\subset D(\cE+ \nu)$.
\begin{proof}
By splitting $u$ into its real and imaginary and then positive and negative part
we can assume afterwards that $u\ge0$.

We now prove the first statement.
Since $\cE$ is regular there exists a sequence $(\phi_n)$ in $\cD \cap C_c(X)$ such that
$\|u-\phi_n\|_{\cE} \to 0$. By the contraction property of Dirichlet forms
we can suppose that $\phi_n\ge 0$ and deduce that $u_n:=\phi_n \wedge u  \to u$
in $(\cD,\|\cdot\|_\cE)$ as well. (Note that $u_n = T(\phi_n -u)$ with the normal contraction $T: \R\longrightarrow \R$, $T(y) =y$ for $y\leq 0$ and $T(y)=0$ for $y\geq 0$.)
Choosing a subsequence, if necessary, we can make sure that $\tilde u_n \to \tilde u$
q.e. Therefore $\tilde u_n \to \tilde u$ a.e. with respect to $\nu_+$ and $\nu_-$.
Now $(\cE+\nu)$-convergence follows by Lebesgue's dominated convergence theorem.

Now we turn to the proof of the second statement.
Without loss of generality we may chose $0\le v \le 1$.
Consider the convex set
\[
C:= \{ \phi \in \cD \cap C_c(X) \mid 0 \le \phi \le \eta\} \,
\]
Since $C$ is convex, its weak and norm closure in $(D(\cE+\nu),\|\cdot\|_{\cE+\nu})$
coincide.
Therefore it suffices to construct a sequence $(\phi_n)\subset C$
that is bounded w.r.t.{} $\|\cdot\|_{\cE+\nu}$ and  converges to $\tilde v$ q.e.
By regularity we can start with a sequence $(\psi_n)\subset \cD\cap C_c(X)$
such that $\psi_n \to v$ w.r.t $\|\cdot\|_{\cE}$ and
$\tilde \psi_n \to \tilde v$ q.e.
By the contraction property of Dirichlet forms
the sequence $\phi_n := 0\vee\psi_n \wedge \eta$ is bounded
in $(\cD, \|\cdot\|_{\cE})$. Since $0\le \phi_n \le \eta$,
$(\phi_n)$ is also bounded in $L^2(\nu_+ +\nu_-)$.
We finally prove the  'in particular' statemtent.  Since $\cE$ is regular, we can find  an
$\eta \in \cD\cap C_c(X), \, 0 \le \eta \le 1$
with $\eta \equiv 1$ on $\supp v$. Now,  the proof follows from the previous parts.
\end{proof}

\subsection*{Strong locality and the energy measure}
$\mathcal{E}$ is called \textit{strongly local} if
$$\mathcal{E}[u,v]=0$$
whenever $u$ is constant a.s. on the support of $v$.

The typical example one should keep in mind is the Laplacian
$$H_0=-\Delta \mbox{ on }L^2(\Omega ),\quad \Omega \subset \mathbb{R}^d\mbox{ open, }$$
in which case
$$
\mathcal{D}=W^{1,2}_0(\Omega )\mbox{ and
}\mathcal{E}[u,v]=\int_{\Omega }(\nabla u| \nabla v)dx.$$ Now we turn
to an important notion generalizing the measure $(\nabla u |\nabla
v)dx$ appearing above.

In fact, every strongly local, regular Dirichlet form $\mathcal{E}$
can be represented in the form
\[
\mathcal{E}[u,v] = \int_X d\Gamma (u,v)
\]
where $\Gamma $ is a nonnegative sesquilinear mapping from
$\mathcal{D}\times\mathcal{D}$ to the set of $\KK$-valued Radon
measures on $X$. It is determined by
\[
\int_X \phi\, d\Gamma(u,u) = \mathcal{E}[u,\phi u] -\frac12
\mathcal{E}[u^2,\phi ]
\]
for realvalued $u\in\DD$, $\phi\in \mathcal{D} \cap
C_c(X)$ and called \textit{energy measure}; see also \cite{BouleauH-91}.
We discuss properties of the energy measure next (see e.g. \cite{BouleauH-91, Fukushima-80, Sturm-94b}). The
energy measure satisfies the Leibniz rule,
\[
d\Gamma(u\cdot v,w)=ud\Gamma (v,w)+vd\Gamma (u,w),
\]
as well as the chain rule
\[
d\Gamma (\eta (u),w)=\eta'(u)d\Gamma (u,w).
\]
One can even insert functions from $\mathcal{D}_{\text{loc}}$ into
$d\Gamma$, where $\mathcal{D}_{\text{loc}}$ is the set
\[
\lbrace u\in L^2_{\text{loc}}\mid \text{for all compact }
K \subset X \text{ there is  }
\phi \in \mathcal{D} \text{ s.\,t. }
\phi =u  \ m\text{-a.e.{} on }K\},
\]
as is readily seen from the
following important property of the energy measure, \textbf{strong
  locality}:

Let $U$ be an open set in $X$ on which the function $\eta\in
\mathcal{D}_{\text{loc}}$ is constant, then
\label{local}
\begin{equation*}
  \chi_U d\Gamma (\eta,u) = 0,
\end{equation*}
for any $u\in \mathcal{D}$. This, in turn, is a consequence of the
strong locality of $\mathcal{E}$ and in fact equivalent to the
validity of the Leibniz rule.

We write $d\Gamma(u):=d\Gamma(u,u)$ and note that the energy measure
satisfies the \textbf{Cauchy-Schwarz inequality}:
\begin{eqnarray*}
  \int_X|fg|d|\Gamma(u,v)| & \le &
  \left(\int_X|f|^2d\Gamma(u)\right)^{\frac12}\left(\int_X|g|^2d\Gamma(v)\right)^{\frac12}\\
  & \le & \frac12 \int_X|f|^2d\Gamma(u)+ \frac12\int_X|g|^2d\Gamma(v) .
\end{eqnarray*}
In order to introduce weak solutions on open subsets of $X$, we extend $\aaa$
and $\nu[\cdot,\cdot]$ to $\mathcal{D}_{\text{loc}}(U)\times \mathcal{D}_c (U)
$: where,
$$
\mathcal{D}_{\text{loc}}(U):=
\lbrace u\in
L^2_{\text{loc}}(U)\mid \forall \text{compact}\;\:
K \subset U \exists\;
\phi \in \mathcal{D} \text{ s.\,t. }
\phi =u  \ m\text{-a.e.{} on }K\}
$$
$$
\mathcal{D}_c(U):=
\lbrace \varphi\in \mathcal{D}|\supp\varphi\mbox{  compact in }U\rbrace .
$$
For $u\in \mathcal{D}_{\text{loc}}(U),
\varphi\in \mathcal{D}_c(U)$ we define
$$
\aaa[u,\varphi]:=\aaa[\eta u,\varphi],
$$
where $\eta \in \mathcal{D}\cap C_c(U)$ is arbitrary with constant value
$1$ on the support of $\varphi$. This makes sense as the RHS does not depend
on the particular choice of $\eta$ by strong locality. In the same way, we can
extend $\nu[\cdot,\cdot]$, using that every $u\in \mathcal{D}_{\text{loc}}(U)$
admits a quasi continuous version $\tilde{u}$. Moreover, also $\Gamma$ extends
to a mapping $\Gamma:\mathcal{D}_{\text{loc}}(U)\times
\mathcal{D}_{\text{loc}}(U)\to \C{M}_R(U)$.

\smallskip

For completeness reasons we explicitly state the following lemma.
\begin{lemma}\label{localversion}
\begin{itemize}
 \item [(a) ] Let $\Psi\in \mathcal{D}_{\text{loc}}\cap L^\infty_ {\text{loc}}(X)$ and $\varphi \in \mathcal{D}\cap L^\infty_c (X)$ be given. Then, $\varphi \Psi$ belongs to $\mathcal{D}$.
 \item [(b) ] Let $\Psi\in \mathcal{D}_{\text{loc}}$ and $\varphi \in \mathcal{D}\cap L^\infty_c (X)$ be such that $d\Gamma(\varphi)\le C\cdot dm$. Then, $\varphi \Psi$ belongs to $\mathcal{D}$.
\end{itemize}
\end{lemma}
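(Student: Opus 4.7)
My plan handles (a) by reducing it to the standard algebra property of $\cD \cap L^\infty(X)$, and (b) by truncating $\Psi$ and reducing to (a), with the hypothesis $d\Gamma(\varphi)\le C\,dm$ supplying the crucial control when $\Psi$ is no longer locally bounded. For part (a), regularity of $\cE$ furnishes $\eta \in \cD\cap C_c(X)$ with $\eta \equiv 1$ on a compact neighborhood $K$ of $\supp\varphi$, and the definition of $\cD_{\text{loc}}$ yields some $\Psi_0\in\cD$ agreeing with $\Psi$ $m$-a.e.\ on $K$. Since $\Psi\in L^\infty_{\text{loc}}(X)$ is essentially bounded by some $M$ on $K$, the normal contraction $T_M(y):=(-M)\vee y\wedge M$ produces $T_M\circ\Psi_0 \in \cD\cap L^\infty(X)$ still agreeing with $\Psi$ $m$-a.e.\ on $K$. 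Both $\varphi$ and $T_M\circ\Psi_0$ then lie in $\cD\cap L^\infty(X)$, which is an algebra under pointwise multiplication (the map $y\mapsto y^2$ becomes a scalar multiple of a normal contraction when restricted to any bounded interval, and general products are recovered via polarization). Hence $\varphi\cdot (T_M\circ\Psi_0)\in\cD$, and it coincides $m$-a.e.\ with $\varphi\Psi$ because both sides vanish off $\supp\varphi\subset K$.

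For part (b), pick $\Psi_0\in\cD$ agreeing with $\Psi$ on a neighborhood of $\supp\varphi$ and set $\Psi_n := T_n\circ \Psi_0 \in \cD\cap L^\infty(X)$; part (a) gives $\varphi\Psi_n\in\cD$. By the Leibniz rule together with the Cauchy--Schwarz inequality for $d\Gamma$,
\[
d\Gamma\bigl(\varphi(\Psi_n-\Psi_m)\bigr)\;\le\; 2(\Psi_n-\Psi_m)^2\,d\Gamma(\varphi) \;+\; 2\varphi^2\,d\Gamma(\Psi_n-\Psi_m).
\]
Integration bounds the first summand by $2C\|\Psi_n-\Psi_m\|_{L^2}^2$ using the hypothesis, and the second by $2\|\varphi\|_\infty^2\,\cE[\Psi_n-\Psi_m]$. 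The chain rule identifies $d\Gamma(\Psi_0-\Psi_n) = \chi_{\{|\Psi_0|>n\}}\,d\Gamma(\Psi_0)$, so finiteness of $\cE[\Psi_0]$ and $\|\Psi_0\|_{L^2}$ together with dominated convergence send both contributions to zero. Thus $(\varphi\Psi_n)$ is Cauchy in $(\cD,\|\cdot\|_\cE)$, its $L^2$-limit is $\varphi\Psi_0 = \varphi\Psi$ $m$-a.e., and closedness of $\cD$ places $\varphi\Psi$ in $\cD$.

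The main obstacle lies in part (b): without the bound $d\Gamma(\varphi)\le C\,dm$, the term $(\Psi_n-\Psi_m)^2\,d\Gamma(\varphi)$ could concentrate on regions where $\Psi$ is very large and spoil the Cauchy estimate. The hypothesis is exactly what allows this term to be absorbed into an $L^2$-difference that automatically tends to zero; everything else in the argument is bookkeeping with the Leibniz and chain rules and the standard fact that $\cD\cap L^\infty(X)$ is a multiplication algebra.
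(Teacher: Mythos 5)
Your proof is correct, and both parts follow the same basic strategy as the paper (localize, truncate, estimate via Leibniz and Cauchy--Schwarz), but the concluding mechanisms differ. For (a) the paper estimates $\int d\Gamma(\varphi\Psi)$ directly by the Leibniz rule and the Cauchy--Schwarz inequality for the energy measure, whereas you replace $\Psi$ by a truncated global representative $T_M\circ\Psi_0\in\mathcal{D}\cap L^\infty(X)$ and invoke the algebra property of $\mathcal{D}\cap L^\infty$ --- a standard fact (recorded in essence by the paper at the end of Section 1), so this is fine. For (b) the paper only establishes a uniform bound $\sup_n\mathcal{E}(\varphi\Psi_n)<\infty$ for the truncated products and concludes by the Fatou-type lemma for closed forms (lower semicontinuity along $L^2$-convergence, Ma--R\"ockner, Lemma I.2.12); you instead prove genuine Cauchy convergence of $(\varphi\Psi_n)$ in the form norm, which is stronger and is enabled by your decision to truncate a global representative $\Psi_0\in\mathcal{D}$ (so that $\mathcal{E}[\Psi_0]$ and $\|\Psi_0\|_{L^2}$ are finite), rather than truncating $\Psi$ itself inside $\mathcal{D}_{\mathrm{loc}}$ as the paper does. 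The small price is the identity $d\Gamma(\Psi_0-\Psi_n)=\chi_{\{|\Psi_0|>n\}}\,d\Gamma(\Psi_0)$: the chain rule as stated in the paper is for $C^1$ functions, so you should either appeal to its standard extension to Lipschitz functions for strongly local forms (using that $\Gamma(\Psi_0)$ does not charge level sets), or argue more simply that truncations converge to $\Psi_0$ in form norm from $\mathcal{E}[\Psi_n]\le\mathcal{E}[\Psi_0]$, $L^2$-convergence, and weak-to-strong convergence in the Hilbert space $(\mathcal{D},\|\cdot\|_{\mathcal{E}})$; also, truncation by $(-M)\vee y\wedge M$ presupposes real-valued functions, so in the complex case split into real and imaginary parts first (the paper makes the analogous reduction to $\Psi\ge 0$). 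In exchange, your argument yields the explicit convergence $\varphi\Psi_n\to\varphi\Psi$ in $\mathcal{D}$, while the paper's route is more economical, needing only closedness and lower semicontinuity and no chain rule for non-smooth contractions.
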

\begin{proof}  Let $K$ be the support of $\varphi$ and $V$ an open neighborhood of $K$.

(a) Locality and the Leibniz rule give
$$ \int d\Gamma (\varphi \Psi) = \int_K |\varphi|^2 d\Gamma (\Psi) + 2 \int_K \varphi \Psi d\Gamma (\varphi, \Psi) + \int_K |\Psi|^2 d\Gamma (\varphi).$$
Obviously, the first and the last term are finite and the middle one can be estimated by Cauchy Schwarz inequality. Putting this  together, we infer
 $\int d\Gamma (\varphi \Psi)<\infty$.

(b) Clearly, it suffices to treat the case $\Psi\ge 0$. Since $\Psi_n:=\Psi\wedge n$ is a normal contraction of $\Psi$ for every $n\in\NN$ it follows that $d\Gamma(\Psi_n)\le d\Gamma(\Psi)$. By part (a) we know that $\varphi \Psi_n\in \mathcal{D}$ and an estimate as above gives that
\begin{eqnarray*}
\cE (\varphi \Psi_n) &=& \int_X d\Gamma( \varphi \Psi_n)\\
&\le & 2 \left(\int_X \varphi^2d\Gamma(  \Psi_n)+ \int_X\Psi_n^2 d\Gamma( \varphi )\right)\\
&\le & 2 \left(\int_X \varphi^2d\Gamma(  \Psi)+ C\int_X \chi_V\Psi^2 dm\right) ,
\end{eqnarray*}
is bounded independently of $n\in\NN$. As $\varphi \Psi_n$ converge to $\varphi \Psi$ in $L^2 (X,m)$, an appeal to the Fatou type lemma for closed forms, \cite{MaR-92}, Lemma 2.12., p. 21 gives the assertion.
\end{proof}

We close this section by noting that both $\DD\cap C_c (X)$ and $\DD \cap L^\infty_c (X)$ are closed under multiplication  (due to Leibniz rule).

\subsection*{The intrinsic metric}
Using the energy measure one can define the \textit{intrinsic metric}
$\rho$ by
$$
\rho (x,y)=\sup \lbrace |u(x)-u(y)|\ | u\in
\mathcal{D}_{\text{loc}}\cap C(X) \mbox{ and } d\Gamma (u)\leq
dm\rbrace
$$
where the latter condition signifies that $\Gamma (u)$ is absolutely
continuous with respect to $m$ and the Radon-Nikodym derivative is
bounded by $1$ on $X$. Note that, in general, $\rho$ need not be a
metric. We say that $\aaa$ is \emph{strictly local} if $\rho$ is a metric that
induces the original topology on $X$. Note that this implies that  $X$ is connected, since
otherwise points in $x,y$ in different connected components would give $\rho(x,y)=\infty$, as characteristic functions of connected components are continuous
and have vanishing energy measure.
We denote the intrinsic balls by
$$B(x,r):=\{ y\in X| \rho(x,y)\le r\} .$$
An important consequence of the latter assumption is that the distance
function $\rho_x(\cdot):=\rho(x,\cdot)$ itself is a function in
$\mathcal{D}_{\text{loc}}$ with $d\Gamma(\rho_x)\le dm$, see
\cite{Sturm-94b}. This easily extends to the fact that for every
closed $E\subset X$ the function $\rho_E(x):= \inf\{ \rho(x,y)|y\in
E\}$ enjoys the same properties (see the Appendix of \cite{BoutetdeMonvelLS-08}). This has a very
important consequence. Whenever $\zeta : \RR\longrightarrow \RR$ is
continuously differentiable, and $\eta :=\zeta \circ \rho_E$, then
$\eta$ belongs to $\mathcal{D}_{\text{loc}}$ and satisfies
\begin{equation}\label{ac}
  d \Gamma (\eta) = (\zeta'\circ \rho_E)^2 d \Gamma (\rho_E)\leq (\zeta'\circ \rho_E)^2  dm.
\end{equation}
For this reason a lot of good cut-off functions are around in our context. More explicitly we note the following lemma (see \cite{BoutetdeMonvelLS-08} as well).

\begin{lemma} \label{cutoff} For any compact $K$ in $X$ there exists a $\varphi \in C_c (X)\cap \mathcal{D}$ with $\varphi \equiv 1$ on $K$, $\varphi \geq 0$ and $d\Gamma (\varphi)\leq C \, dm$ for some $C>0$. If $L$ is another compact set containing $K$ in its interior, then $\varphi$ can be chosen to have support in $L$.
\end{lemma}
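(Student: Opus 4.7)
The plan is to construct $\varphi$ as $\zeta\circ\rho_E$ for a judicious closed set $E$ and $C^1$ profile $\zeta$, and then to upgrade the resulting function from $\mathcal{D}_{\text{loc}}$ to $\mathcal{D}$ via Lemma \ref{localversion}(a). The strict locality assumption underlying this subsection makes $\rho$ induce the original topology on $X$ and furnishes $\rho_E \in \mathcal{D}_{\text{loc}}$ with $d\Gamma(\rho_E) \le dm$, which is precisely what is needed.

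If no $L$ is given, first pick a compact neighborhood of $K$ using local compactness of $X$. Set $E := X\setminus\mathrm{int}(L)$; this is closed and disjoint from $K$, and since $\rho_E$ is continuous with $\{\rho_E = 0\} = E$, compactness of $K$ yields $d := \min_{x\in K}\rho_E(x) > 0$. Put $\delta := d/2$ and choose $\zeta \in C^1(\RR,[0,1])$ with $\zeta \equiv 0$ on $(-\infty,0]$, $\zeta \equiv 1$ on $[\delta,\infty)$, and $\|\zeta'\|_\infty \le 2/\delta$. Defining $\varphi := \zeta\circ\rho_E$, the chain rule bound \eqref{ac} gives $\varphi \in \mathcal{D}_{\text{loc}}$ with $d\Gamma(\varphi) \le (2/\delta)^2\,dm$. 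The properties $0 \le \varphi \le 1$, $\varphi \equiv 1$ on $K$, continuity of $\varphi$, and $\supp\varphi \subset \overline{\mathrm{int}(L)} \subset L$ are all immediate from the choice of $\zeta$ and of $E$.

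The step we expect to be the main obstacle is the passage from $\mathcal{D}_{\text{loc}}$ to $\mathcal{D}$. For this we apply Lemma \ref{localversion}(a) with $\Psi := \varphi$ and with a multiplier $\chi \in \mathcal{D} \cap L^\infty_c(X)$ satisfying $\chi \equiv 1$ on $\supp\varphi$. Such a $\chi$ is produced by a standard argument from regularity of $\mathcal{E}$ alone: take a Urysohn function $g \in C_c(X)$ with $g \equiv 1$ on $\supp\varphi$, $0 \le g \le 1$, and compact support, approximate $g$ in sup norm by some $h \in \mathcal{D}\cap C_c(X)$ with $\|h - g\|_\infty < 1/3$, and set $\chi := ((3h-1)\vee 0)\wedge 1$, which lies in $\mathcal{D}\cap C_c(X)$ by the normal-contraction property and equals $1$ on $\supp\varphi$. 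Lemma \ref{localversion}(a) then gives $\chi\varphi \in \mathcal{D}$; since $\chi \equiv 1$ on $\supp\varphi$ while $\varphi$ vanishes off $\supp\varphi$, $\chi\varphi = \varphi$ pointwise, so $\varphi \in \mathcal{D}$. Combined with the earlier properties, this delivers $\varphi \in C_c(X)\cap\mathcal{D}$ with all the required features.
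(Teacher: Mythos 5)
Your construction is essentially the paper's: the paper also builds the cut-off by composing a smooth profile with an intrinsic distance function and invoking \eqref{ac} --- it takes $\zeta\circ\rho_K$ with $\zeta(0)=1$ and $\supp\zeta\subset(-\infty,r)$, where $r=\dist(K,X\setminus L)$, whereas you take the mirror image $\zeta\circ\rho_E$ with $E=X\setminus\mathrm{int}(L)$; the two variants are interchangeable. What you add is an explicit justification that the resulting bounded, compactly supported element of $\mathcal{D}_{\text{loc}}$ actually lies in $\mathcal{D}$, via Lemma \ref{localversion}(a) together with a multiplier $\chi\in\mathcal{D}\cap C_c(X)$ obtained from regularity; the paper's one-line proof leaves this membership step implicit, so your treatment is more complete. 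One small repair in that step: $t\mapsto((3t-1)\vee 0)\wedge 1$ is not a normal contraction (its Lipschitz constant is $3$), so instead write $\chi=S\circ(3h)$ with the normal contraction $S(t)=((t-1)\vee 0)\wedge 1$ and $3h\in\mathcal{D}\cap C_c(X)$; with that adjustment your argument goes through as stated.
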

\begin{proof} Let $r>0$ be the positive distance of $K$ to the complement of $L$. Choose a two times differentiable $\zeta : \R\to [0,\infty)$ with $\zeta (0)=1$ and support contained in $(-\infty,r)$. Then, $\zeta\circ \rho_K$ does the job by \eqref{ac}.
\end{proof}

\subsection*{Irreducibility}
We will now discuss a notion that will be crucial in the proof of the
existence of positive weak solutions below the spectrum. In what follows,
$\mathfrak{h}$ will denote a densely defined, closed semibounded form in
$L^2(X)$ with domain $D(\mathfrak{h})$ and positivity preserving semigroup
$(T_t;t\ge 0)$.  We denote by $H$ the associated operator.  Actually, the
cases of interest in this paper are $\mathfrak{h}=\aaa$ or
$\mathfrak{h}=\aaa+\nu$ with $\nu\in\ \mathcal{M}_{R,0}-\mathcal{M}_{R,1}$.
We refer to \cite{ReedS-78}, XIII.12 and a forthcoming paper \cite{LenzSV-pre}
for details. We say that $\mathfrak{h}$ is \emph{reducible}, if there is a
measurable set $M\subset X$ such that $M$ and its complement $M^c$ are
nontrivial (have positive measure) and $L^2(M)$ is a reducing subspace for
$M$, i.e., ${\mathds{1}}_MD(\mathfrak{h})\subset D(\mathfrak{h})$,
$\mathfrak{h}$ restricted to ${\mathds{1}}_MD(\mathfrak{h})$ is a closed form
and $\aaa(u,v) = \aaa(u{\mathds{1}}_M,v{\mathds{1}}_M) +\aaa(u{\mathds{1}}_{M^c},v{\mathds{1}}_{M^c}) $
for all $u,v$.
If there is no such decomposition of $\mathfrak{h}$, the latter form is called
\emph{irreducible}. Note that reducibility can be rephrased in terms of the
semigroup and the resolvent:

\begin{theorem}
 Let $\mathfrak{h}$ be as above. Then the following conditions are equivalent:
\begin{itemize}
 \item $\mathfrak{h}$ is irreducible.
  \item $T_t$ is positivity improving, for every $t>0$, i.e. $f\ge 0$ and $f\not=0$ implies that $T_tf>0$ a.e.
 \item $(H+E)^{-1}$ is positivity improving for every $E<\inf\sigma(H)$.
\end{itemize}
\end{theorem}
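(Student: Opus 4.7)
The proof follows the classical scheme for positivity-preserving semigroups (cf.\ Reed--Simon, XIII.12). I organize it as (2)~$\Leftrightarrow$~(3), the easy direction (2)~$\Rightarrow$~(1), and the main step (1)~$\Rightarrow$~(2).

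For (2)~$\Leftrightarrow$~(3), the tool is the Laplace-transform representation
\[
(H+E)^{-1} f = \int_0^\infty e^{-tE}\, T_t f \, dt,
\]
valid for $E$ in the range that makes $H+E$ strictly positive. If (2) holds and $f \geq 0$, $f \neq 0$, then $e^{-tE} T_t f$ is strictly positive a.e.\ for each $t > 0$, so the integral is strictly positive a.e. Conversely, the Hille--Yosida exponential formula $T_t f = \lim_{n \to \infty} (n/t)^n (H + n/t)^{-n} f$, together with passage to an a.e.\ convergent subsequence, shows that if every admissible resolvent is positivity improving then $T_t f \geq 0$ a.e.; strict positivity then follows by writing $T_t f = T_{t/2}(T_{t/2} f)$ with $T_{t/2} f \geq 0$ and nonzero.

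The implication (2)~$\Rightarrow$~(1) is immediate by contrapositive. If $L^2(M)$ reduces $\mathfrak{h}$ with $0 < m(M) < m(X)$, then multiplication by $\mathds{1}_M$ commutes with $H$, hence with every $T_t$. Picking $0 \leq f \in L^2(M)$ with $f \neq 0$, we get $T_t f = \mathds{1}_M T_t f$, which vanishes on $M^c$, contradicting positivity improvement.

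For (1)~$\Rightarrow$~(2), I argue by contrapositive. Assume $T_t$ is not positivity improving; by the equivalence already proved, (3) fails as well, so there exist an admissible $E$ and $0 \leq f_0 \in L^2$, $f_0 \neq 0$, such that $F := (H+E)^{-1} f_0$ vanishes on a set of positive measure. The identity
\[
T_t F = \int_0^\infty e^{-sE}\, T_{s+t} f_0 \, ds = e^{tE}\int_t^\infty e^{-sE}\, T_s f_0 \, ds \leq e^{tE} F
\]
forces $T_t F$ to vanish wherever $F$ does. Setting $M := \{F > 0\}$ and approximating any $0 \leq g \in L^2(M)$ by $g_n := g \wedge (n F)$ (which converges to $g$ in $L^2$ by dominated convergence, as $n F \to \infty$ pointwise on $M$), the pointwise bound $T_t g_n \leq n\, T_t F \leq n e^{tE} F$ forces $T_t g$ to vanish on $M^c$. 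Splitting a general $g \in L^2(M)$ into real/imaginary and positive/negative parts, one concludes $T_t(L^2(M)) \subseteq L^2(M)$ for every $t > 0$, i.e.\ $L^2(M)$ is a reducing subspace with $M$ non-trivial, contradicting (1).

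The main obstacle is this last implication: extracting a non-trivial reducing subspace from mere failure of strict positivity. Routing through the resolvent is essential---the single function $F = (H+E)^{-1} f_0$ encodes the whole semigroup trajectory of $f_0$ and thus naturally yields a forward-invariant support set, whereas a direct semigroup argument at a single time $t_0$ would require analyzing the real-analytic function $t \mapsto (g, T_t f)$ and the propagation of its zeros along the orbit.
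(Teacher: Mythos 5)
Label the three conditions (i) irreducibility, (ii) $T_t$ positivity improving for every $t>0$, (iii) the resolvents positivity improving. The gap is in your proof of (iii)~$\Rightarrow$~(ii). The exponential formula only gives $T_tf\ge 0$ (which is in any case part of the standing hypotheses: the semigroup is assumed positivity preserving), and your final step -- ``strict positivity then follows by writing $T_tf=T_{t/2}(T_{t/2}f)$ with $T_{t/2}f\ge 0$ and nonzero'' -- is circular: to conclude that $T_{t/2}$ maps a nonnegative nonzero function to an a.e.\ strictly positive one is exactly the positivity improvement of the semigroup that you are trying to prove. The resolvent hypothesis applied to $T_{t/2}f$ only yields that $\int_0^\infty e^{-sE}T_{s+t/2}f\,ds>0$ a.e., i.e.\ positivity of a time average of the orbit, not of the orbit at the single time $t$. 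The damage propagates: your argument for (i)~$\Rightarrow$~(ii) begins with ``$T_t$ not positivity improving, hence by the equivalence already proved (iii) fails'', which invokes precisely the broken implication (iii)~$\Rightarrow$~(ii). So after removing the circular step, what you have soundly established is (ii)~$\Rightarrow$~(iii) (Laplace transform), (ii)~$\Rightarrow$~(i) (commutation of $\mathds{1}_M$ with $T_t$), and -- this part is genuinely nice and correct -- not\,(iii)~$\Rightarrow$~not\,(i), i.e.\ (i)~$\Rightarrow$~(iii), via the forward invariance $T_tF\le e^{tE}F$ of $M=\{F>0\}$ and the truncation $g\wedge nF$ (together with self-adjointness to get invariance of $L^2(M^c)$ and injectivity of the resolvent to see $m(M)>0$). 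No sound argument is given that either (i) or (iii) implies improvement of $T_t$ at every fixed time, and that is the substantive content of the theorem.

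Note also that the paper itself does not prove this statement; it refers to Reed--Simon, Section XIII.12, and to the forthcoming paper of the authors. In that standard treatment the step you are missing (irreducibility implies $e^{-tH}$ positivity improving for each fixed $t$) requires a separate argument: one must show that for $f,g\ge 0$ nonzero the nonnegative function $t\mapsto(g,T_tf)$ cannot vanish at a single time $t_0$ without producing a nontrivial reducing subspace. The splitting $T_{t_0}=T_{t_0/2}T_{t_0/2}$ only shows that a zero at $t_0$ yields two nonnegative, nonzero functions $u=T_{t_0/2}f$ and $v=T_{t_0/2}g$ with disjoint supports; turning that disjointness into an invariant $L^2(M)$ (or otherwise excluding it under (i) or (iii)) is exactly the work that remains. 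Supplying this implication would complete your proof; the remaining parts can stand as written.
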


In \cite{LenzSV-pre} we will show that for a strictly local Dirichlet form $\aaa$ as above and a measure perturbation $\nu\in\ \mathcal{M}_{R,0}-\mathcal{M}_{R,1}$, irreducibility of $\aaa$ implies irreducibility of $\aaa+\nu$.

\section{Positive weak solutions and the associated transformation}
Throughout this section we consider a strongly local, regular Dirichlet form,
$(\aaa,\DD)$ on $X$ and denote by $\Gamma: \DD_{\text{loc}}\times\DD_{\text{loc}}\to
\C{M}(X)$ the associated energy measure. We will be concerned with weak
solutions $\Phi$ of the equation
\begin{equation}\label{eq21}
(H_0+V)\Phi=E\cdot \Phi,
\end{equation}
where $H_0$ is the operator associated with $\aaa$ and $V$ is a realvalued,
locally integrable potential. In fact, we will consider a somewhat more
general framework, allowing for measures instead of functions, as presented in
the previous section. Moreover, we stress the fact that (\ref{eq21}) is formal
in the sense that $\Phi$ is not assumed to be in the operator domain of
neither $H_0$ nor $V$. Here are the details.

\begin{definition}
Let $U\subset X$ be open and $\nu\in\C{M}_{R,0}(U)$ be a signed Radon measure on $U$ that charges no set of capacity zero.
Let  $E\in\RR$ and $\Phi\in L^2_{\text{loc}}(U)$. We say that $\Phi$ is a
\emph{weak supersolution} of $(H_0+\nu)\Phi=E\cdot\Phi$ in $U$ if:
\begin{itemize}
\item[(i)] $\Phi\in\DD_{\text{loc}}(U)$,
\item[(ii)] $\tilde{\Phi}d\nu\in\C{M}_R(U)$,
\item[(iii)] $\forall \varphi\in\DD\cap C_c(U), \varphi\ge 0:$
$$
\aaa[\Phi,\varphi]+\int_U\varphi \tilde{\Phi}d\nu\ge E \cdot (\Phi |\varphi) .
$$
\end{itemize}
We call $\Phi$ a
\emph{weak solution} of $(H_0+\nu)\Phi=E\cdot\Phi$ in $U$ if equality holds in (iii) above
(which extends to all $\varphi\in\DD\cap C_c(U)$).
If $V\in L^1_{\text{loc}}(U)$ we say that $\Phi$ is a
\emph{weak (super-)solution} of $(H_0+V)\Phi=E\cdot\Phi$ in $U$ if it is a weak (super-)
solution of  $(H_0+\nu)\Phi=E\cdot\Phi$ for $\nu=Vdm$.
\end{definition}
\begin{remark}
\begin{itemize}
\item[(1)] If $\nu=Vdm$ and $V\in L^2_{\text{loc}}(U)$, then property (ii) of the Definition above is satisfied.
\item[(2)] If $\Phi\in L^\infty_{\text{loc}}(U)$ and $\nu\in\C{M}_R(U)$ then (ii) of the Definition above is satisfied.
\item[(3)] If $\nu\in\C{M}_R(U)$ satisfies (ii) above then $\nu - Edm\in\C{M}_R(U)$
satisfies (ii) as well and any weak solution of
$(H_0+\nu)\Phi=E\cdot\Phi$ in $U$ is a weak solution of
$(H_0+\nu-Edm)\Phi=0$ in $U$. Thus it suffices to consider the case $E=0$.
\item[(4)] If $\Phi$ is a weak solution on $U$, then
$$
\aaa[\Phi,\varphi]+\int_U\varphi \tilde{\Phi}d\nu =  E \cdot (\Phi |\varphi) .
$$
for all $\varphi \in\mathcal{D}\cap L^\infty_c (U)$. This follows easily from (b) of the approximation Lemma \ref{l:approximation}. (Note that we can indeed approximate within $U$ by first choosing an appropriate $\eta$ with compact support in $U$ according to Lemma \ref{cutoff}.)
\end{itemize}
\end{remark}

We will deal with functions $\Phi \in \DD_{\text{loc}}$ with  $\Phi >0$. If $\Phi$ is such a function  and $\Phi^{-1} \in L^\infty_{\text{loc}}$, we can use the chain rule and suitable smoothed version of the function $x\mapsto 1/x$ to conclude that $\Phi^{-1}$ must belong to $\DD_{\text{loc}}$ as well. This will be used various times in the sequel.

\smallskip

Here comes   the first half of the Allegretto-Piepenbrink Theorem in a
general form.

\begin{theorem}\label{t:transformation}
Let $(\aaa,\DD)$ be a regular, strictly local Dirichlet form, $H_0$ be the associated operator and $\nu\in\C{M}_{R,0}(U)$. Suppose
that $\Phi$  is a weak solution  of $(H_0+\nu)\Phi=E\cdot\Phi$ in $U$ with $\Phi>0$ $m$-a.e. and $\Phi, \Phi^{-1}\in L^\infty_{\text{loc}}(U)$.
Then, for all $\varphi, \psi\in \DD\cap L^\infty_{c}(U)$:
\[
\aaa[\varphi,\psi]+\nu[\varphi,\psi] = \int_U\Phi^2d\Gamma(\varphi\Phi^{-1},
\psi\Phi^{-1}) + E\cdot (\varphi |\psi) .
\]
In particular, $\aaa+\nu\ge E$ if furthermore  $U=X$.
\end{theorem}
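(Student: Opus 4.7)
The plan is to run the standard ground state transformation. Writing $\varphi = u\Phi$ with $u:=\varphi\Phi^{-1}$ and $\psi = v\Phi$ with $v:=\psi\Phi^{-1}$, I would use the Leibniz rule to expand $d\Gamma(u\Phi,v\Phi)$, isolate a $\Phi^2\,d\Gamma(u,v)$ term, rewrite the remaining cross terms as $d\Gamma(\Phi,\chi)$ with $\chi:=uv\Phi = \varphi\psi\Phi^{-1}$, and then eliminate $\aaa[\Phi,\chi]$ by testing the weak solution equation against $\chi$. Using the key identity $\chi\Phi = \varphi\psi$ will convert the $\nu$- and $L^2$-contributions produced by the tested equation into $\nu[\varphi,\psi]$ and $E(\varphi\mid\psi)$, and the formula will drop out.

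Before that computation I would settle all memberships. Since $\Phi\in\DD_{\text{loc}}(U)$ and $\Phi,\Phi^{-1}\in L^\infty_{\text{loc}}(U)$, the chain rule applied to $\Phi$ with a smoothed version of $x\mapsto 1/x$ on a relatively compact open set containing $\supp\varphi\cup\supp\psi$ yields $\Phi^{-1}\in\DD_{\text{loc}}(U)\cap L^\infty_{\text{loc}}(U)$, as already noted just before the theorem. Lemma \ref{localversion}(a) then places $u$, $v$, and $\chi$ in $\DD\cap L^\infty_c(U)$, which by Remark 2.2(4) is precisely the class admissible for testing the weak equation.

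With the memberships in hand, two applications of the Leibniz rule give
\[
d\Gamma(u\Phi,v\Phi) - \Phi^2\,d\Gamma(u,v) \;=\; uv\,d\Gamma(\Phi) + u\Phi\,d\Gamma(\Phi,v) + v\Phi\,d\Gamma(u,\Phi) \;=\; d\Gamma(\Phi,uv\Phi),
\]
so integration produces
\[
\aaa[\varphi,\psi] \;=\; \aaa[\Phi,\chi] + \int_U \Phi^2\,d\Gamma(\varphi\Phi^{-1},\psi\Phi^{-1}).
\]
Testing $(H_0+\nu)\Phi = E\Phi$ against $\chi$ yields $\aaa[\Phi,\chi] + \int_U\tilde\chi\tilde\Phi\,d\nu = E(\Phi\mid\chi)$. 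Since $\nu$ charges no capacity-zero set and $\tilde\chi\tilde\Phi = \widetilde{\varphi\psi}$ q.e.\ (products of q.c.\ representatives coincide q.e.\ with q.c.\ representatives of products), the $\nu$-integral equals $\nu[\varphi,\psi]$, while $(\Phi\mid\chi)=(\varphi\mid\psi)$ follows from $\chi\Phi = \varphi\psi$ $m$-a.e. Combining these gives the claimed identity.

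For the ``in particular'' assertion, I would set $\varphi=\psi\in\DD\cap L^\infty_c(X)$: nonnegativity of $d\Gamma(\varphi\Phi^{-1})$ forces $(\aaa+\nu)[\varphi] \ge E\|\varphi\|^2$, and Lemma \ref{l:approximation}(a) extends this by density to all of $D(\aaa+\nu)$ (the $L^2$-norm being controlled by $\|\cdot\|_{\aaa+\nu}$). The main friction I expect lies in the membership step---verifying that $\Phi^{-1}\in\DD_{\text{loc}}(U)$ so that the Leibniz calculus is legitimate, that $\chi$ is an admissible test function, and that the q.e.\ identification of $\tilde\chi\tilde\Phi$ with $\widetilde{\varphi\psi}$ is valid on the $\nu$-relevant sets, so that the $\nu$-integrals can be compared cleanly.
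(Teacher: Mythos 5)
Your proof is correct and essentially the same as the paper's: the identical Leibniz-rule ground-state computation reducing everything to testing the weak equation against $\chi=\varphi\psi\Phi^{-1}$ (admissible via Lemma~\ref{localversion} and part (4) of the remark preceding the theorem), followed by the same density/regularity argument based on Lemma~\ref{l:approximation} for the ``in particular'' claim. The only difference is cosmetic---the paper organizes the algebra through the identity $0=\Phi^{-1}d\Gamma(w,\Phi)+\Phi\,d\Gamma(w,\Phi^{-1})$ and explicitly reduces to real-valued $\varphi,\psi$ first, a reduction you should also state, since the symmetry $d\Gamma(u,\Phi)=d\Gamma(\Phi,u)$ implicit in your expansion requires it in the complex case.
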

\begin{proof}
The ``in particular'' is clear as the desired inequality holds on $\DD\cap C_c (X)$ and the form is regular by Lemma \ref{l:approximation}.

For the rest of the proof we may assume $E=0$ without restriction, in view of
the preceding remark.  Without loss of generality we may also assume that $\varphi$ and $\psi$ are real valued functions.  We now  evaluate the RHS of the above equation, using the following
identity. The Leibniz rule implies that  for arbitrary $w\in \DD_{\text{loc}}(U)$:
$$
0=d\Gamma(w,1)=d\Gamma(w,\Phi\Phi^{-1})=\Phi^{-1}d\Gamma(w,\Phi)+\Phi d\Gamma(w,\Phi^{-1})\qquad( \bigstar)
$$
Therefore, for $\varphi, \psi\in \DD\cap C_c(X)$:
\begin{eqnarray*}
\int_X\Phi^2d\Gamma(\varphi\Phi^{-1},
\psi\Phi^{-1}) &=& \int_X\Phi d\Gamma(\varphi,\psi\Phi^{-1})+
\int_X\Phi^2\varphi d\Gamma(\Phi^{-1},\psi\Phi^{-1})\\
(\mbox{by symmetry})\;\:&=& \int_X d\Gamma(\varphi,\psi)+\int_X\Phi \psi d\Gamma(\varphi,\Phi^{-1})\\
& & + \int_X\Phi^2\varphi d\Gamma(\psi\Phi^{-1},\Phi^{-1})\\
&=& \aaa[\varphi,\psi] + \int_X\Phi^2 d\Gamma(\varphi\psi\Phi^{-1},\Phi^{-1})\\
(\mbox{ by $(\bigstar)$})\;\:&=& \aaa[\varphi,\psi]- \int_X d\Gamma(\varphi\psi\Phi^{-1},\Phi)\\
 &=& \aaa[\varphi,\psi]-\aaa[\varphi\psi\Phi^{-1},\Phi].
 \end{eqnarray*}
 As $\Phi$ is a weak solution we can now use part (4) of the previous remark to continue the computation by
 \begin{eqnarray*}
...&=& \aaa[\varphi,\psi]-\left( -\nu[\varphi\psi\Phi^{-1},\Phi] \right)\\
 &=& \aaa[\varphi,\psi]+\nu[\varphi,\psi] .
\end{eqnarray*}
This finishes the proof.
\end{proof}

We note a number of consequences of the preceding theorem. The first is rather
a consequence of the proof, however:

\begin{corollary}\label{c2.4}
Assume that there is a weak supersolution $\Phi$ of $(H_0+\nu)\Phi=E\cdot\Phi$ on $X$ with
$\Phi>0$ $m$-a.e. and $\Phi,\Phi^{-1}\in L^\infty_{\text{loc}} (X)$. Then
$\aaa +\nu \ge E$.
\end{corollary}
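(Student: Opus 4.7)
The plan is to re-run the algebraic computation from the proof of Theorem \ref{t:transformation}, with the equality supplied there by the weak-solution property replaced by the supersolution inequality applied to a sign-preserving choice of test function. After reducing to $E=0$ by Remark 2.2(3), the task is to show $\aaa[\varphi]+\nu[\varphi]\ge 0$ for every real $\varphi\in\DD\cap C_c(X)$; density of $\DD\cap C_c(X)$ in $(D(\aaa+\nu),\|\cdot\|_{\cE+\nu})$ from Lemma \ref{l:approximation}, together with a standard real/imaginary decomposition, then delivers $\aaa+\nu\ge E$.

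The test function I would feed into the supersolution inequality is $\varphi^2\Phi^{-1}$. The hypotheses $\Phi,\Phi^{-1}\in L^\infty_{\text{loc}}(X)$ combined with $\Phi\in\DD_{\text{loc}}$ and the chain rule (applied to a smoothed $x\mapsto 1/x$, as noted just before Theorem \ref{t:transformation}) give $\Phi^{-1}\in\DD_{\text{loc}}\cap L^\infty_{\text{loc}}$, whence $\varphi^2\Phi^{-1}\in\DD\cap L^\infty_c(X)$ by Lemma \ref{localversion}(a). Running the Leibniz and $(\bigstar)$ manipulations from the proof of Theorem \ref{t:transformation} with $\psi=\varphi$ yields
\[
0 \;\le\; \int_X\Phi^2\,d\Gamma(\varphi\Phi^{-1}) \;=\; \aaa[\varphi]-\aaa\bigl[\varphi^2\Phi^{-1},\Phi\bigr].
\]
The supersolution inequality with the non-negative test function $\varphi^2\Phi^{-1}$ reads $\aaa[\Phi,\varphi^2\Phi^{-1}]+\int_X\varphi^2\Phi^{-1}\tilde\Phi\,d\nu\ge 0$, and since $\Phi\cdot\Phi^{-1}=1$ q.e.\ in quasi-continuous representatives the $\nu$-integral collapses to $\nu[\varphi]$. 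Combining the two relations gives $\aaa[\varphi]+\nu[\varphi]\ge 0$, as required.

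The only non-routine step — and hence the main obstacle — is that Definition 2.1(iii) only grants the supersolution inequality for test functions in $\DD\cap C_c(U)$, while $\varphi^2\Phi^{-1}$ merely lies in $\DD\cap L^\infty_c(X)$. I would handle this exactly as in Remark 2.2(4), invoking Lemma \ref{l:approximation}(b) to approximate $\varphi^2\Phi^{-1}$ by a sequence $(\phi_n)\subset\DD\cap C_c(X)$ with $0\le\phi_n\le\eta$ (for a suitable cut-off $\eta$ provided by Lemma \ref{cutoff}) that converges in $\|\cdot\|_{\cE+\nu}$. Each $\phi_n$ is non-negative and therefore admissible in Definition 2.1(iii), the uniform bound $\phi_n\le\eta$ gives the domination needed for the $\nu$-integrand, and the inequality passes to the limit. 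This sign-preserving approximation is the decisive ingredient and is precisely the feature that distinguishes the present argument from the proof of Theorem \ref{t:transformation}, where both sides were equalities and no monotonicity had to be tracked.
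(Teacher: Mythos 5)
Your proposal is correct and follows essentially the same route as the paper: rerun the computation from the proof of Theorem \ref{t:transformation} with $\psi=\varphi$, and at the final step replace the weak-solution equality by the supersolution inequality applied to the non-negative test function $\varphi^2\Phi^{-1}$, with the extension to test functions in $\DD\cap L^\infty_c(X)$ handled by the sign-preserving approximation of Lemma \ref{l:approximation}(b) exactly as in Remark 2.2(4). You have merely spelled out the approximation detail that the paper leaves implicit in its one-line proof.
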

For the \emph{Proof} we can use the same calculation as in the proof of the Theorem
with $\varphi=\psi$ and use the inequality instead of the equality at the end.
\begin{remark}
\begin{itemize}
\item[(1)] We can allow for complex measures $\nu$ without problems. In the context of PT--symmetric
  operators there is recent interest in this type of Schr\"{o}dinger
  operators, see \cite{BenderBM-99}
\item[(2)] Instead of measures also certain distributions could be included. Cf \cite{HerbstS-78} for such singular perturbations.
\end{itemize}
\end{remark}

We will extend Theorem  \ref{t:transformation} to all of $\varphi,\psi\in \cD$. This is somewhat technical. The main part is done in the next three propositions. We will assume the situation (S):

\begin{itemize}
\item[(S)]
Let $(\aaa,\DD)$ be a regular, strictly local Dirichlet form, $H_0$ be the associated operator and $\nu \in \mathcal{M}_{R,0}- \mathcal{M}_{R,1}$.  Suppose
that $\Phi$  is a weak solution  of $(H_0+\nu)\Phi=E\cdot\Phi$ in $X$ with $\Phi>0$ $m$-a.e. and $\Phi, \Phi^{-1}\in L^\infty_{\text{loc}}(X)$.
\end{itemize}

\begin{proposition}\label{convergence} Assume (S).
Let $u\in \DD(\cE+\nu)$ be given. Let $(u_n)$ be a sequence in $\DD (\cE + \nu)\cap L^\infty_c (X)$ which converges to $u$ with respect to $\|\cdot\|_{\cE + \nu}$. Then,
  $\varphi u_n \Phi^{-1}$ and $\varphi u \Phi^{-1}$ belong to  $\DD (\cE + \nu) $ and
  $$ \|\varphi u_n \Phi^{-1} - \varphi u \Phi^{-1}\|_{\cE+ \nu} \to 0, n\to \infty$$
  for any $\varphi \in \DD\cap C_c (X)$ with $d\Gamma (\varphi)\leq C  d m$ for some $C>0$. In particular,  $u\Phi^{-1}$ belongs to $\DD_{\text{loc}}$.
\end{proposition}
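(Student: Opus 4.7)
The plan is to realize $T\colon f\mapsto \varphi f\Phi^{-1}$ as a bounded linear operator on $(\DD(\cE+\nu),\|\cdot\|_{\cE+\nu})$, initially defined on the dense subspace $\DD\cap L^\infty_c(X)$ (density being Lemma \ref{l:approximation}(a)), and then to read off the three claims of the proposition from this. Since $\Phi\in\DD_{\text{loc}}$ with $\Phi,\Phi^{-1}\in L^\infty_{\text{loc}}$, the chain rule with a smoothed version of $x\mapsto 1/x$ places $\Phi^{-1}$ in $\DD_{\text{loc}}\cap L^\infty_{\text{loc}}$, as already noted in the paper; Lemma \ref{localversion}(a) then gives $\varphi\Phi^{-1}\in\DD\cap L^\infty_c(X)$. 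Because $\DD\cap L^\infty_c(X)$ is closed under multiplication, $\varphi f\Phi^{-1}=(\varphi\Phi^{-1})f$ lies in $\DD\cap L^\infty_c(X)\subset \DD(\cE+\nu)$ for every $f\in\DD\cap L^\infty_c(X)$; applied to $f=u_n$, this settles $\varphi u_n\Phi^{-1}\in\DD(\cE+\nu)$.

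The key step is the uniform estimate $\|\varphi f\Phi^{-1}\|_{\cE+\nu}\le K\,\|f\|_{\cE+\nu}$ on $\DD\cap L^\infty_c(X)$, and the main obstacle is that a direct Leibniz expansion of $d\Gamma(\varphi f\Phi^{-1})$ introduces $d\Gamma(\Phi)$, which we have no means of dominating by $m$ from the assumptions. I bypass this by invoking Theorem \ref{t:transformation} in the reverse direction. Since $\varphi f\Phi^{-1}$ is supported in $\supp\varphi$, strong locality confines $d\Gamma(\varphi f\Phi^{-1})$ to the same compact set, and $\Phi^{-1}\in L^\infty_{\text{loc}}$ provides a constant $c>0$ with $\Phi\ge c$ on $\supp\varphi$; hence
\[
\cE[\varphi f\Phi^{-1}]\le c^{-2}\int_X\Phi^2\,d\Gamma(\varphi f\Phi^{-1})=c^{-2}\bigl(\cE[\varphi f]+\nu[\varphi f,\varphi f]-E\|\varphi f\|_2^2\bigr),
\]
the last equality being Theorem \ref{t:transformation} applied to the test function $\varphi f\in\DD\cap L^\infty_c(X)$. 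The piece $\cE[\varphi f]$ is expanded via Leibniz; combined with the Cauchy--Schwarz inequality for $\Gamma$ and the hypothesis $d\Gamma(\varphi)\le C\,dm$ this yields $\cE[\varphi f]\le 2\|\varphi\|_\infty^2\cE[f]+2C\|f\|_2^2$. The measure term satisfies $\nu[\varphi f,\varphi f]=\int\varphi^2\tilde f^2\,d\nu$, hence $|\nu[\varphi f,\varphi f]|\le \|\varphi\|_\infty^2(\nu_+[f,f]+\nu_-[f,f])$, and $\nu_-\in\mathcal{M}_{R,1}$ supplies $\nu_-[f,f]\le \kappa\cE[f]+c_\kappa\|f\|_2^2$. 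Combined with the trivial bounds $\|\varphi f\Phi^{-1}\|_2^2,\;\nu_+[\varphi f\Phi^{-1},\varphi f\Phi^{-1}]\le \|\varphi\Phi^{-1}\|_\infty^2(\|f\|_2^2+\nu_+[f,f])$, this yields the desired uniform inequality, with $K$ depending only on $\varphi$, $\Phi|_{\supp\varphi}$, $E$, and $\kappa,c_\kappa$.

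With this uniform bound, $T$ extends continuously to a bounded operator on $\DD(\cE+\nu)$; its value at $u$ is the $\|\cdot\|_{\cE+\nu}$-limit of $\varphi u_k\Phi^{-1}$ for any approximating sequence $(u_k)\subset\DD\cap L^\infty_c(X)$, hence also their $L^2$-limit, which equals the pointwise product $\varphi u\Phi^{-1}$ because $\varphi\Phi^{-1}$ is bounded on its compact support. Applied to the given sequence $(u_n)$, this yields both $\varphi u\Phi^{-1}\in \DD(\cE+\nu)$ and the asserted $\|\cdot\|_{\cE+\nu}$-convergence. For the ``in particular'' assertion, given any compact $K\subset X$ the cutoff Lemma \ref{cutoff} furnishes $\eta\in\DD\cap C_c(X)$ with $\eta\equiv 1$ on $K$ and $d\Gamma(\eta)\le C'\,dm$; the preceding discussion applied with $\varphi$ replaced by $\eta$ produces $\eta u\Phi^{-1}\in\DD(\cE+\nu)\subset\DD$, which agrees $m$-a.e.\ with $u\Phi^{-1}$ on $K$, thereby witnessing $u\Phi^{-1}\in\DD_{\text{loc}}$.
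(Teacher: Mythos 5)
Your proof is correct and follows essentially the same route as the paper: the decisive step in both is the reverse application of Theorem \ref{t:transformation} to bound $\cE[\varphi f\Phi^{-1}]$ by $c^{-2}\bigl(\cE[\varphi f]+\nu[\varphi f,\varphi f]-E\|\varphi f\|_2^2\bigr)$ using a lower bound for $\Phi$ on $\supp\varphi$, combined with the Leibniz/Cauchy--Schwarz estimate exploiting $d\Gamma(\varphi)\le C\,dm$ and the relative bound on $\nu_-$. Packaging these estimates as a uniform operator bound for $f\mapsto\varphi f\Phi^{-1}$ on the dense subspace $\DD\cap L^\infty_c(X)$, rather than applying them directly to the differences $u_n-u_m$ as the paper does, is only a cosmetic reformulation of the same argument.
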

\begin{proof} Without loss of generality we assume $E=0$.

As shown in Lemma \ref{localversion} $\varphi \Phi^{-1}$ 
belongs to $\DD\cap L^\infty_c$. Hence, $ \varphi u_n \Phi^{-1} = u_n (\varphi \Phi^{-1})$ is a product of functions in $\DD\cap L^\infty_c $ and therefore belongs to $\DD\cap L^\infty_c$ as well.

As $\varphi \Phi^{-1}$ belongs to $L^\infty$, the sequence
$\varphi u_n \Phi^{-1}$ converges to $\varphi u \Phi^{-1}$ in $L^2 (X,m)$. It therefore suffices to show that $\varphi u_n \Phi^{-1}$ is a Cauchy sequence with respect to $\|\cdot\|_{\cE + \nu}$.

As $(u_n)$ is Cauchy with respect to $\|\cdot\|_{\cE + \nu}$ and $\varphi \Phi^{-1}$  is bounded, convergence of the $\nu$ part is taken care of and it
 suffices to show that
$$\cE (\varphi (u_n - u_m) \Phi^{-1}) \to 0, n,m\to \infty.$$
Let $K$ be the compact support of $\varphi$.
 Let $c>0$ be an upper bound for $\Phi^{-2}$ on $K$. Choose $n,m\in \N$ and set $v:= u_n - u_m$. Then, we can calculate
 \begin{eqnarray*}
 \cE (\varphi v \Phi^{-1}) &=&\int_K d\Gamma (\varphi v \phim)\\
 &=& \int_K \frac{1}{\Phi^2} \Phi^2  d\Gamma (\varphi v \phim)\\
 &\leq & c \int_K \Phi^2  d\Gamma (\varphi v \phim)\\
(\mbox{Previous theorem})\;\: &=& c (\cE (\varphi v) + \nu (\varphi v))\\
 &=& c (\cE( \varphi (u_n - u_m)) + \nu (\varphi (u_n - u_m))).
  \end{eqnarray*}
 Now, convergence of $\nu (\varphi (u_n - u_m))$ to $0$ for $n,m\to\infty$ can easily be seen (with arguments as at  the beginning of the proof). As for $\cE( \varphi (u_n - u_m))$ we can use Leibniz rule  and Cauchy-Schwarz and $d\Gamma (\varphi) \leq C \, dm$ to compute
 \begin{eqnarray*} \cE (\varphi (u_n - u_m)) &=& \int_K d\Gamma (\varphi (u_n - u_m))\\
 &=& \int \varphi^2 d\Gamma (u_n - u_m) + 2 \int \varphi (u_n - u_m) d\Gamma (\varphi,u_n - u_m)\\
  & &+ \int |u_n - u_m|^2 d\Gamma (\varphi)\\
 &\leq &
 2 ( \int \varphi^2 d\Gamma (u_n - u_m) + \int |u_n - u_m|^2 d\Gamma (\varphi))\\
 &\leq & 2 \|\varphi\|^2 \cE (u_n - u_m) + 2 C \int |u_n - u_m|^2 dm.
 \end{eqnarray*}
This gives  easily the desired convergence to zero and  $(\varphi u_n \phim)$ is a Cauchy sequence with respect to $\|\cdot\|_{\cE + \nu}$.

We now turn to a proof of the last statement: By Lemma \ref{cutoff}, for any compact $K$ we can find a $\varphi$ satisfying the assumptions of the proposition with $\varphi \equiv 1$ on $K$. Then,  $\varphi u \phim$ belongs to $\DD$ by  the above argument and agrees with $u \phim$ on $K$ be construction.
\end{proof}

\begin{proposition} Assume (S).
Let $u\in \DD(\cE+\nu)$ be given. Let $(u_n)$ be a sequence in $\DD (\cE + \nu)\cap L^\infty_c (X)$ which converges to $u$ with respect to $\|\cdot\|_{\cE + \nu}$.  Then,   $$\int \psi d\Gamma ( u_n \phim) \to \int \psi d\Gamma ( u \phim)$$
for any $\psi \in L^\infty_c (X)$.
\end{proposition}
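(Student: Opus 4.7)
The plan is to reduce the statement to the previous proposition by multiplying with a suitable cutoff, then exploit strong locality of the energy measure to replace the (only locally defined) objects $u_n\Phi^{-1}$, $u\Phi^{-1}$ by genuine elements of $\mathcal{D}$, and finally use a polarisation-plus-Cauchy--Schwarz estimate on $d\Gamma$.

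More concretely, first I would pick, using Lemma \ref{cutoff}, a cutoff $\varphi \in \cD\cap C_c(X)$ with $\varphi\equiv 1$ on an open neighbourhood $V$ of $\supp \psi$ and $d\Gamma(\varphi)\le C\,dm$. The previous Proposition \ref{convergence} then applies with this $\varphi$ and gives $\varphi u_n\Phi^{-1},\,\varphi u\Phi^{-1}\in \cD$ with
\[
\|\varphi u_n \Phi^{-1} - \varphi u \Phi^{-1}\|_{\cE+\nu}\longrightarrow 0.
\]
Next I would use strong locality of the energy measure: since $u_n\Phi^{-1}-\varphi u_n\Phi^{-1}=(1-\varphi)u_n\Phi^{-1}$ vanishes $m$-a.e.\ on the open set $V\supset \supp\psi$, strong locality of $\Gamma$ (applied within $\cD_{\text{loc}}$, where $u_n\Phi^{-1}$ lives by Proposition \ref{convergence}) yields
\[
\int \psi\, d\Gamma(u_n\Phi^{-1}) = \int \psi\, d\Gamma(\varphi u_n\Phi^{-1}),
\]
and likewise for $u$ in place of $u_n$.

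It therefore remains to show $\int\psi\,d\Gamma(\varphi u_n\Phi^{-1})\to \int\psi\,d\Gamma(\varphi u\Phi^{-1})$, with both integrands now coming from honest $\cD$-functions. By the bilinear identity $\Gamma(f)-\Gamma(g)=\Gamma(f-g,\,f+g)$ and the Cauchy--Schwarz inequality for the energy measure,
\[
\left|\int \psi\,d\Gamma(\varphi u_n\Phi^{-1})-\int \psi\,d\Gamma(\varphi u\Phi^{-1})\right|
\le \|\psi\|_\infty\, \cE(\varphi u_n\Phi^{-1}-\varphi u\Phi^{-1})^{1/2}\,\bigl(\cE(\varphi u_n\Phi^{-1})^{1/2}+\cE(\varphi u\Phi^{-1})^{1/2}\bigr).
\]
The first factor tends to $0$ by Proposition \ref{convergence}, while the second factor is bounded (again by Proposition \ref{convergence}, since the sequence $\varphi u_n\Phi^{-1}$ converges in $\cE$-norm and is therefore bounded there). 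This yields the claim.

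The main obstacle I anticipate is the justification of the first step: one has to be careful that $u_n\Phi^{-1}$ and $u\Phi^{-1}$ are only defined as elements of $\cD_{\text{loc}}$, and the identity $\int\psi\,d\Gamma(u_n\Phi^{-1})=\int\psi\,d\Gamma(\varphi u_n\Phi^{-1})$ has to be read in that sense, using strong locality in the form $\chi_V\,d\Gamma(w-\varphi w)=0$ whenever $\varphi\equiv 1$ on $V$. Once this bookkeeping is done, the rest is just polarisation and Cauchy--Schwarz together with what Proposition \ref{convergence} has already supplied.
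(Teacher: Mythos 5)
Your proposal is correct and follows essentially the same route as the paper: reduce to $\varphi u_n\Phi^{-1},\varphi u\Phi^{-1}\in\cD$ via a cut-off from Lemma \ref{cutoff} and strong locality, then conclude by bilinearity of $\Gamma$ and the Cauchy--Schwarz inequality together with the convergence $\cE(\varphi(u_n-u)\Phi^{-1})\to 0$ from Proposition \ref{convergence}. The only cosmetic differences are that the paper proves the estimate first (as a claim, splitting the difference as two cross terms rather than via $\Gamma(f)-\Gamma(g)=\Gamma(f-g,f+g)$) and invokes locality afterwards, whereas you spell out the locality step a bit more explicitly.
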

\begin{proof} We start with an intermediate claim.

\smallskip

 Claim. For any $\psi \in L^\infty (X)$ and $\varphi \in \DD \cap C_c (X)$ with $d\Gamma (\varphi) \leq C \, dm$ for some $C>0$, we have  $\int \psi d\Gamma (\varphi u_n \phim) \to \int \psi d\Gamma (\varphi u \phim)$.

\smallskip

Proof of the claim.  By triangle inequality, the difference between the terms in question can be estimated by
$$ | \int \psi d\Gamma (\varphi (u- u_n) \phim, \varphi u_n \phim)| + |\int \psi d\Gamma (\varphi  u\phim,\varphi (u- u_n)\phim)|.$$
By Cauchy Schwarz inequality these terms can be estimated by
$$\|\psi\|_\infty \cE ((\varphi (u_n -u) \phim)^{1/2} \cE (\varphi u_n \phim)^{1/2}$$
and
$$ \|\psi\|_\infty \cE ((\varphi (u_n -u) \phim)^{1/2} \cE (\varphi u \phim)^{1/2}.$$
The previous proposition gives that $\cE ( \varphi (u_n -u) \phim)\to 0$, $n\to \infty$ and the claim follows.

\bigskip

Let now $\psi \in L^\infty_c (X)$ be given.  Let $K$ be the compact support of $\psi$. We use Lemma \ref{cutoff} to find
$\varphi\in C_c (X)\cap \DD$ with $\varphi \equiv 1$ on $K$ and $d\Gamma (\varphi)\leq C\, dm$. for some $C>0$. Locality gives
$$ \int \psi d\Gamma ( u_n \phim)  = \int \psi d\Gamma (\varphi  u_n \phim) $$
and
$$ \int \psi d\Gamma ( u \phim) = \int \psi d\Gamma ( \varphi u \phim)$$
and the proposition   follows from the claim.
\end{proof}

\begin{proposition}\label{stern} Assume (S).
Let $u\in \DD(\cE+\nu)$ be given. Let $(u_n)$ be a sequence in $\DD (\cE + \nu)\cap L^\infty_c (X)$ which converges to $u$ with respect to $\|\cdot\|_{\cE + \nu}$.  
Then,   $$\int \Phi^2 d\Gamma ( u_n \phim) \to \int \Phi^2 d\Gamma ( u \phim).$$
\end{proposition}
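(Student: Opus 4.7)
The strategy has three stages: (i) use Theorem \ref{t:transformation} to recast the integral $\int_X\Phi^2\,d\Gamma(u_n\phim)$ in terms of the $\cE+\nu$ norm, (ii) combine the preceding propositions with monotone convergence to obtain an a priori bound $\int_X\Phi^2\,d\Gamma(u\phim)\le L$ together with $\int_X\Phi^2\,d\Gamma((u-u_m)\phim)\to 0$, and (iii) close the argument by Cauchy--Schwarz for the positive bilinear form $\int_X\Phi^2\,d\Gamma(\cdot,\cdot)$.

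For stage (i), since $u_n$ and $u_n-u_m$ lie in $\DD\cap L^\infty_c(X)$, Theorem \ref{t:transformation} (with $U=X$ and $\varphi=\psi$ equal to $u_n$, respectively $u_n-u_m$) gives
\[
\int_X\Phi^2\,d\Gamma(u_n\phim)=\cE[u_n]+\nu[u_n,u_n]-E\|u_n\|^2
\]
and an analogous identity for $u_n-u_m$. Convergence in $\|\cdot\|_{\cE+\nu}$ forces $\cE[u_n]$, $\nu_+(u_n,u_n)$ and $\|u_n\|^2$ to converge; the relative form bound $\nu_-(v,v)\le\kappa\cE[v]+c_\kappa\|v\|^2$ together with Cauchy--Schwarz for $\nu_-$ delivers the same for $\nu_-(u_n,u_n)$. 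Setting $L:=\cE[u]+\nu[u,u]-E\|u\|^2$, this yields $\int_X\Phi^2\,d\Gamma(u_n\phim)\to L$ and $\int_X\Phi^2\,d\Gamma((u_n-u_m)\phim)\to 0$ as $n,m\to\infty$.

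For stage (ii), Lemma \ref{cutoff} supplies, for every compact $K\subset X$, a cutoff $\varphi_K\in\DD\cap C_c(X)$ with $0\le\varphi_K\le 1$ and $\varphi_K\equiv 1$ on $K$; in particular $\Phi^2\varphi_K\in L^\infty_c(X)$. The preceding proposition then gives $\int_X\Phi^2\varphi_K\,d\Gamma(u_n\phim)\to\int_X\Phi^2\varphi_K\,d\Gamma(u\phim)$. Since neither Proposition \ref{convergence} nor the preceding proposition requires the target to lie in $L^\infty_c$, the same reasoning applied with the sequence $u_n-u_m\to u-u_m$ (for fixed $m$), combined with monotone convergence as $\varphi_K\uparrow 1$, delivers
\[
\int_X\Phi^2\,d\Gamma(u\phim)\le L<\infty \quad\text{and}\quad \int_X\Phi^2\,d\Gamma((u-u_m)\phim)\le\lim_n\int_X\Phi^2\,d\Gamma((u_n-u_m)\phim),
\]
and the right-hand side of the second inequality tends to $0$ as $m\to\infty$ by stage (i). For stage (iii), the identity $d\Gamma(u_m\phim)-d\Gamma(u\phim)=d\Gamma((u_m-u)\phim,(u_m+u)\phim)$ and the Cauchy--Schwarz inequality for the energy measure from Section 1 give
\[
\Bigl|\int_X\Phi^2\,d\Gamma(u_m\phim)-\int_X\Phi^2\,d\Gamma(u\phim)\Bigr|\le\Bigl(\int_X\Phi^2\,d\Gamma((u_m-u)\phim)\Bigr)^{1/2}\Bigl(\int_X\Phi^2\,d\Gamma((u_m+u)\phim)\Bigr)^{1/2};
\]
the first factor vanishes by stage (ii), while the second is bounded by $(2L_m+2L)^{1/2}$, where $L_m:=\int_X\Phi^2\,d\Gamma(u_m\phim)\to L$. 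The claim follows.

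The main difficulty is that $\Phi^2$ is only locally bounded, so the preceding proposition cannot be applied directly to the integrals of interest. The decisive trick is the ground state transformation (Theorem \ref{t:transformation}), which trades integration against $\Phi^2\,d\Gamma$ for the $\cE+\nu$ norm and thereby supplies both the uniform bound and the Cauchy property that let compact exhaustion and Cauchy--Schwarz close the argument.
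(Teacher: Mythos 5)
Your proposal is correct and follows essentially the same route as the paper's own proof: your stage (ii) is exactly the paper's Claim $\cE(u)+\nu(u)\ge\int\Phi^2\,d\Gamma(u\Phi^{-1})$ (obtained there with $\psi=\Phi^2\chi_R$ and a Fatou argument) together with its application to $u-u_m$, and your stage (iii) makes explicit the Cauchy--Schwarz step the paper invokes by reference to the previous proposition. The only differences are cosmetic (cut-offs from Lemma \ref{cutoff} instead of ball characteristic functions, keeping $E$ general, and the polarization identity, which one should apply after the harmless reduction to real-valued functions).
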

\begin{proof} Without loss of generality we assume $E=0$. We start with the following claim.

\smallskip

Claim. $\cE (u) + \nu (u) \geq \int \Phi^2 d\Gamma (u \phim)$.

Proof of claim. By convergence of $u_n$ to $u$ w.r.t. $\|\cdot\|_{\cE+ \nu}$  and the last theorem, we have
 $$\cE (u) + \nu (u) = \lim_{n\to \infty} \cE (u_n) + \nu (u_n)=\lim_{n\to \infty} \int \Phi^2  d\Gamma (u_n \phim).$$
Let $\chi_R$ be the characteristic function of a the ball with radius $R$ around a fixed point in $X$ and $\psi= \Phi^2 \chi_R	$.
With this choice of $\psi$ the preceeding proposition can be applied.
 Now, the claim follows easily from  a Fatou type argument when $R$ tends to infinity.

\medskip

We now note that for fixed $n\in \NN$, the sequence $(u_m - u_n)_m$ converges to $u-u_n$ w.r.t. $\|\cdot\|_{\cE+ \nu}$. We can therefore  apply the claim to $u - u_n$ instead of $u$. This gives

$$ \cE (u-u_n) + \nu (u- u_n) \geq \int \Phi^2 d\Gamma ((u -u_n) \phim) \geq 0$$
for any $n\in \NN$.  As the left hand side converges to zero for $n\to \infty$, so does the right hand side.

Mimicking the argument given in the proof of the Claim  of the previous proposition, we can now conclude the desired statement.
\end{proof}

\begin{corollary}\label{c:transformation}
Let $(\aaa,\DD)$ be a regular, strictly local Dirichlet form, $H_0$ be the associated operator and $\nu \in \mathcal{M}_{R,0}- \mathcal{M}_{R,1}$.  Suppose
that $\Phi$  is a weak solution  of $(H_0+\nu)\Phi=E\cdot\Phi$ in $X$ with $\Phi>0$ $m$-a.e. and $\Phi, \Phi^{-1}\in L^\infty_{\text{loc}}(X)$. Then, for   all $\varphi,\psi \in D(\cE + \nu)$, the products
 $\varphi \phim,\psi \phim$ belong to $\DD_{\text{loc}}$ and
 the formula
\begin{equation}\label{eq:formula}
\aaa[\varphi,\psi]+\nu[\varphi,\psi] = \int_X\Phi^2d\Gamma(\varphi\Phi^{-1},
\psi\Phi^{-1}) + E\cdot (\varphi |\psi)
\end{equation}
holds.
\end{corollary}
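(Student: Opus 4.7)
The strategy is to extend the identity from $\DD\cap L^\infty_c(X)$, where it is supplied by Theorem~\ref{t:transformation}, to all of $D(\cE+\nu)$ by an approximation argument in $\|\cdot\|_{\cE+\nu}$; Propositions~\ref{convergence}--\ref{stern} are tailored precisely to control the energy-measure term on the right-hand side in the limit. The first assertion, that $\varphi\Phi^{-1},\psi\Phi^{-1}\in\DD_{\text{loc}}$, is exactly the ``in particular'' clause of Proposition~\ref{convergence}.

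For the formula, note that $\nu\in\mathcal{M}_{R,0}-\mathcal{M}_{R,1}$ implies $\nu\in\mathcal{M}_{R,0}$, so Theorem~\ref{t:transformation} applies to test functions from $\DD\cap L^\infty_c(X)$. By Lemma~\ref{l:approximation}(a) I choose sequences $(\varphi_n),(\psi_n)\subset\DD\cap L^\infty_c(X)$ with $\varphi_n\to\varphi$ and $\psi_n\to\psi$ in $\|\cdot\|_{\cE+\nu}$. Theorem~\ref{t:transformation} then yields, for every $n$,
\begin{equation*}
\aaa[\varphi_n,\psi_n]+\nu[\varphi_n,\psi_n]
\;=\;\int_X\Phi^2\,d\Gamma(\varphi_n\Phi^{-1},\psi_n\Phi^{-1})
+E\cdot(\varphi_n\mid\psi_n).
\end{equation*}
The plan is to pass to the limit on both sides.

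The left-hand side converges to $\aaa[\varphi,\psi]+\nu[\varphi,\psi]$ because the sesquilinear form $\aaa+\nu$ is jointly continuous on $D(\cE+\nu)$ in its natural norm: $\aaa+\nu_+$ is dominated by $\|\cdot\|_{\cE+\nu}^2$ by definition, and $\nu_-$ is $\aaa$-bounded with relative bound less than one, so polarisation together with Cauchy--Schwarz closes this up. The term $E\cdot(\varphi_n\mid\psi_n)\to E\cdot(\varphi\mid\psi)$ follows from $\|\cdot\|_{L^2}\le\|\cdot\|_{\cE+\nu}$.

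The main obstacle is the bilinear energy-measure term. By sesquilinearity of $\Gamma$, its difference with the conjectured limit splits into
$\int_X\Phi^2\,d\Gamma((\varphi_n-\varphi)\Phi^{-1},\psi_n\Phi^{-1})$ and $\int_X\Phi^2\,d\Gamma(\varphi\Phi^{-1},(\psi_n-\psi)\Phi^{-1})$. Each is bounded via the Cauchy--Schwarz inequality for the energy measure from Section~1 (with $\Phi^2$ absorbed into the measure) as a product of two factors of the form $\bigl(\int_X\Phi^2\,d\Gamma(\cdot\,\Phi^{-1})\bigr)^{1/2}$. The crucial tool here is the inequality
\[
\int_X\Phi^2\,d\Gamma(u\Phi^{-1})\;\le\;\cE(u)+\nu(u)\qquad(u\in D(\cE+\nu)),
\]
which is exactly the claim proved inside the argument of Proposition~\ref{stern}; that argument relies only on the existence of an approximating sequence in $\DD\cap L^\infty_c(X)$ as supplied by Lemma~\ref{l:approximation}(a), and hence holds for every $u\in D(\cE+\nu)$. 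Applied to $u=\psi_n$ or $u=\varphi$ it gives uniform boundedness of one factor, and applied to $u=\varphi_n-\varphi$ or $u=\psi_n-\psi$ it forces the other factor to zero. This delivers the convergence of the energy-measure term and thereby the desired identity \eqref{eq:formula}.
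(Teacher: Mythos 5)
Your proposal is correct and takes essentially the same route as the paper's proof: approximate $\varphi,\psi$ in $\|\cdot\|_{\cE+\nu}$ via Lemma \ref{l:approximation}(a), apply Theorem \ref{t:transformation} to the approximants, use continuity of $\cE+\nu$ on $D(\cE+\nu)$ for the left-hand side, and use the machinery of Proposition \ref{stern} for the energy-measure term (the membership $\varphi\Phi^{-1},\psi\Phi^{-1}\in\DD_{\text{loc}}$ coming from Proposition \ref{convergence}, exactly as in the paper). The only cosmetic differences are that the paper first reduces to $E=0$ and to $\varphi=\psi$ by polarization and then cites Proposition \ref{stern} as stated, whereas you treat the mixed term directly via Cauchy--Schwarz plus the claim inside Proposition \ref{stern}; since that claim was proved under the normalization $E=0$, you should either normalize likewise (as the paper does, justified by Remark 2.2(3)) or carry the harmless $-E\|u\|_2^2$ correction in the stated inequality.
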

\begin{proof} Without loss of generality we assume $E=0$.  It suffices to consider $\varphi = \psi$.
By Proposition \ref{convergence}, $\varphi \phim$ belongs to $\DD_{\text{loc}}$. According to Lemma \ref{l:approximation}, we can  choose  a sequence $(\varphi_n)$ in $\DD\cap L^\infty_c (X)$ converging to $\varphi$ w.r.t. $\|\cdot\|_{\cE+ \nu}$. This convergence and the last theorem then give
$$\cE(\varphi) + \nu (\varphi) = \lim_{n\to \infty} \cE(\varphi_n) + \nu(\varphi_n)= \lim_{n\to \infty} \int \Phi^2  d\Gamma (\varphi_n \phim).$$
The previous proposition then yields the desired formula.
\end{proof}

\section{The existence of positive weak solutions below the spectrum}
As noted in the preceding section, we find that $H_0+\nu\ge E$ whenever
$\aaa+\nu$ is closable and admits a positive weak solution of
$(H_0+\nu)\Phi=E\Phi$. In this section we prove the converse under suitable
conditions. We use an idea from \cite{Simon-82,CyconFKS-87} where the corresponding
statement for ordinary Schr\"{o}dinger operators on $\RR^d$ can be found. A
key property is related to the celebrated \emph{Harnack inequality}.
\begin{definition}{\rm
\begin{itemize}
\item [(1)] We say that $H_0+\nu$ \emph{satisfies a Harnack inequality} for
  $E\in\RR$ if, for every relatively compact, connected open $X_0\subset X$ there is a
  constant $C$ such that all positive weak solutions $\Phi$ of
  $(H_0+\nu)\Phi=E\Phi$ on $X_0$ are locally bounded and satisfy
$$
\mbox{esssup}_{B(x,r)}u\le C  \mbox{essinf}_{B(x,r)}u ,
$$
for every $B(x,r)\subset X_0$ where esssup and essinf denote the essential supremum and infimum.
\item [(2)] We say that $H_0+\nu$ satisfies the \emph{Harnack principle} for
  $E\in\RR$ if for every relatively compact, connected open subset $U$ of $X$
  and every sequence $(\Phi_n)_{n\in\NN}$ of nonnegative solutions of
  $(H_0+\nu)\Phi=E\cdot\Phi$ in $U$ the following implication holds: If, for
  some measurable subset $A \subset U$ of positive measure
$$
\sup_{n\in\NN}\|\Phi_n{\mathds{1}}_A\|_2 <\infty
$$
then, for all compact $K\subset U$ also
$$
\sup_{n\in\NN}\| \Phi_n{\mathds{1}}_K\|_2 <\infty .
$$
\item [(3)]
 We say that $H_0+\nu$ satisfies the \emph{uniform Harnack principle}
 if for every bounded intervall $I\subset\RR$,
 every relatively compact, connected open subset $U$ of $X$ and every sequence $(\Phi_n)_{n\in\NN}$ of nonnegative solutions of $(H_0+\nu)\Phi=E_n\cdot\Phi$ in $U$ with $E_n\in I$ the following implication holds:
If, for some measurable subset $A \subset U$ of positive measure
$$
\sup_{n\in\NN}\|\Phi_n{\mathds{1}}_A\|_2 <\infty
$$
then, for all compact $K\subset U$ also
$$
\sup_{n\in\NN}\| \Phi_n{\mathds{1}}_K\|_2 <\infty .
$$
\end{itemize}
}
\end{definition}
Note that validity of a Harnack principle implies that a nonnegative
weak solution $\varPhi$ must vanish identically if it vanishes on a
set of positive measure (as $\varPhi_n := n \varPhi$ has vanishing
$L^2$ norm on the set of positive measure in question).  Note also
that validity of an Harnack inequality extends from balls to compact
sets by a standard chain of balls argument.  This easily shows that
$H_0+\nu$ satisfies the Harnack principle for $E\in\RR$ if it obeys a
Harnack inequality for $E\in\RR$.  Therefore, many situations are
known in which the Harnack principle is satisfied:

\begin{remark}\label{r:Hansen}
\begin{itemize}
\item[(1)] For $\nu\equiv 0$ and $E=0$ a Harnack inequality holds, whenever $\aaa$ satisfies a Poincar\'e and a volume doubling property; cf \cite{BiroliM-95} and the discussion there.
\item[(2)] The most general results for $H_0=-\Delta$ in terms of the measures
  $\nu$ that are allowed seem to be found in \cite{Hansen-99a}, which also
  contains a thorough discussion of the literature prior to 1999. A crucial
  condition concerning the measures involved is the Kato condition and the
  uniformity of the estimates from \cite{Hansen-99a} immediately gives that the
  uniform Harnack principle is satisfied in that context.  Of the enormous
  list of papers on Harnack's inequality, let us mention
  \cite{AizenmanS-82,Biroli-01, BiroliM-06, ChiarenzaFG-86,Hansen-99a, HoffmannHN-95, Kassmann-07,
    Moser-61,Saloff-Coste-95, Serrin-64,Sturm-94c,Sturm-96}
\end{itemize}
\end{remark}

Apart from the Harnack principle there is a second property that will be
important in the proof of existence of positive general eigensolutions at
energies below the spectrum:
We say that $\aaa$ satisfies the \emph{local compactness property} if
$D_0(U):=\overline{D\cap C_c(U)}^{\|\cdot\|_\aaa}$ is compactly
embedded in $L^2(X)$ for every relatively compact open $U\subset X$.
(In case of the classical Dirichlet form this follows from Rellich's
Theorem on compactness of the embedding of Sobolev spaces in $L^2$.)

\begin{theorem}\label{thm3.3}
  Let $(\aaa,\DD)$ be a regular, strictly local, irreducible Dirichlet
  form, $H_0$ be the associated operator and
  $\nu\in\C{M}_{R,0}-\C{M}_{R,1}$.  Suppose that $\aaa$ satisfies the
  local compactness property and $X$ is noncompact. Then, if
  $E<\inf\sigma(H_0+\nu)$ and $H_0+\nu$ satisfies the Harnack
  principle for $E$, there is an a.e. positive solution of
  $(H_0+\nu)\Phi=E\Phi$.
\end{theorem}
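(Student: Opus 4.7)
\medskip

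\noindent\textbf{Plan of proof.} The approach is the classical one (cf.\ \cite{Simon-82,CyconFKS-87}): produce a sequence of positive approximate solutions via the resolvent, normalize, and extract a subsequential limit using the Harnack principle and local compactness.

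First, exhaust $X$ by relatively compact, connected, open sets $X_1\Subset X_2\Subset\cdots$ with $\bigcup_j X_j=X$; this is possible because $X$ is locally compact, separable, noncompact, and connected (the latter forced by strict locality). Fix a measurable $A\subset X_1$ with $0<m(A)<\infty$. Since $E<\inf\sigma(H_0+\nu)$, the resolvent $R:=(H_0+\nu-E)^{-1}$ exists on $L^2(X)$ and, by Theorem~1.2 combined with irreducibility of $\aaa+\nu$ (inherited from $\aaa$, as announced in the paper), $R$ is positivity improving. For each $j$ pick $0\le f_j\in L^2(X)$ with $f_j\not\equiv 0$ and $\supp f_j\subset X\setminus\overline{X_j}$ (nonempty by noncompactness), and set $u_j:=Rf_j\in D(H_0+\nu)$. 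Then $u_j>0$ a.e., and testing $(H_0+\nu-E)u_j=f_j$ against $\varphi\in\DD\cap C_c(X_j)$ (whose support is disjoint from $\supp f_j$) shows that $u_j$ is a weak solution of $(H_0+\nu)u=Eu$ on $X_j$. Normalize: $\Phi_j:=u_j/\|u_j\chi_A\|_2$.

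For each fixed $k$ and $j>k$, $\Phi_j$ is a nonnegative weak solution on the relatively compact connected open set $X_k$ with $\|\Phi_j\chi_A\|_2=1$. The Harnack principle thus gives $\sup_{j>k}\|\Phi_j\chi_K\|_2<\infty$ for every compact $K\subset X_k$. A Caccioppoli-type estimate, obtained by testing the weak equation with $\eta^2\Phi_j$ for a cutoff $\eta\in\DD\cap C_c(X_k)$ with $d\Gamma(\eta)\le C\,dm$ (from Lemma~\ref{cutoff}), together with the Leibniz and chain rules and the form bound of $\nu_-$, then upgrades this $L^2_{\text{loc}}$-bound to $\sup_{j>k}\|\eta\Phi_j\|_\aaa<\infty$. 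The local compactness property now yields, via a diagonal argument over $k$, a subsequence $\Phi_{j_\ell}\to\Phi$ strongly in $L^2_{\text{loc}}(X)$ and weakly in $\DD_{\text{loc}}$, with $\Phi\ge 0$ and $\|\Phi\chi_A\|_2=1$; in particular $\Phi\not\equiv 0$.

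Passing to the limit in the weak equation, any $\varphi\in\DD\cap C_c(X)$ is supported in some $X_k$, and for $j_\ell>k$
\[
\aaa[\Phi_{j_\ell},\varphi]+\int\varphi\,\widetilde{\Phi}_{j_\ell}\,d\nu=E\,(\Phi_{j_\ell}\mid\varphi).
\]
Weak $\DD_{\text{loc}}$-convergence handles the $\aaa$-term, strong $L^2$-convergence handles the right-hand side, and form boundedness of $\nu$ together with quasi-everywhere convergence of quasi-continuous representatives (in the spirit of the arguments of Section~2) handles the $\nu$-term. Hence $\Phi$ is a weak solution of $(H_0+\nu)\Phi=E\Phi$ on all of $X$. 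For strict positivity, the remark after the definition of the Harnack principle shows that a nonnegative solution vanishing on a positive-measure subset of any $X_k$ must vanish throughout $X_k$; by connectedness of $X$ this propagates, and the normalization $\|\Phi\chi_A\|_2=1$ rules out $\Phi\equiv 0$, so $\Phi>0$ a.e. The main technical obstacle is the Caccioppoli estimate and the passage to the limit in the $\nu$-term: both require careful control of the singular perturbation via its form bound and of quasi-continuous representatives, while local compactness and the Harnack principle take care of the remaining compactness aspects.
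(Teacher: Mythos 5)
Your overall strategy coincides with the paper's: resolvents applied to nonnegative data supported far out in an exhaustion, normalization on a fixed inner set, the Harnack principle to get uniform local $L^2$ bounds, a Caccioppoli estimate to get local form bounds, local compactness plus a diagonal argument to extract a limit, and the Harnack principle again for positivity. However, there is a genuine gap at the point where you pass to the limit in the potential term. You write that ``form boundedness of $\nu$'' together with q.e.\ convergence handles the $\nu$-term, but in the hypothesis $\nu\in\C{M}_{R,0}-\C{M}_{R,1}$ only the negative part $\nu_-$ is form bounded; the positive part $\nu_+$ is merely a Radon measure charging no sets of capacity zero and admits no bound of the type $\nu_+[u,u]\le \kappa\aaa[u]+c\|u\|^2$. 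Consequently, local convergence in $(\DD,\|\cdot\|_\aaa)$ and in $L^2(m)$ does not by itself give convergence of $\int\varphi\,\tilde{\Phi}_{j}\,d\nu_+$, nor does it even show that $\tilde{\Phi}\,d\nu_+$ is locally a Radon measure, which is condition (ii) in the definition of a weak solution and must be verified for the limit $\Phi$. This is exactly the step to which the paper devotes the longest part of its argument: it exploits that the differences $\Phi_n-\Phi_m$ are themselves weak solutions on a common region, tests the equation against the truncated functions $\{(-k)\vee(\Phi_n-\Phi_m)\wedge k\}\eta_R$, and thereby expresses the $\nu_+$-pairing through the $E$-, $\nu_-$- and $\aaa$-terms, all of which are already under control; letting $n,m\to\infty$ uniformly in $k$ yields the Cauchy property in $L^2(\eta_R\,d\nu_+)$, and q.e.\ convergence of quasi-continuous representatives identifies the limit. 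Without some such device your limit passage in the $\nu_+$-term does not go through.

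A secondary point feeding into the same issue: you only extract weak convergence in $\DD_{\text{loc}}$. That suffices for the $\aaa$-term, but the paper upgrades to strong local convergence in $(\DD,\|\cdot\|_\aaa)$ by applying the Caccioppoli inequality a second time, to the differences $\Phi_n-\Phi_k$, using the already established strong $L^2_{\text{loc}}$ convergence. This strong convergence is what produces a q.e.\ convergent subsequence of quasi-continuous representatives and the convergence of $\eta_R\tilde{\Phi}_n$ in $L^2(\nu_-)$, both of which your argument implicitly needs (and which enter the $\nu_+$ truncation argument above). Also, when you derive the Caccioppoli bound by testing with $\eta^2\Phi_j$, note that $\Phi_j$ is not known to be locally bounded under the Harnack principle alone, so either a truncation of $\Phi_j$ or a citation of the Caccioppoli inequality for local weak solutions (as the paper does, from the Schnol paper reference) is needed to make that test function admissible. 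The remaining steps of your outline (connectedness of the exhaustion, positivity improvement giving a nonzero normalization, nonvanishing of the limit, and a.e.\ positivity via the Harnack principle) match the paper and are fine.
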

\begin{proof}
  Let $E<\inf\sigma(H_0+\nu)$.  Since $X$ is noncompact, locally
  compact and $\sigma$-compact, it can be written as a countable union
$$
X=\bigcup_{R\in\NN}U_R,\; U_R\mbox{  open, relatively compact  }, \overline{U_R}
\subset U_{R+1} ;
$$
where the $U_R$ can be chosen connected, as $X$ is connected, see \cite{LenzSV-pre} for details.

For $n\in\NN$ let $g_n\in L^2(X)$ with $\supp g_n\subset X\setminus U_{n+2}$, $g_n\ge 0$ and $g_n\not= 0$. It follows that
$$
\Phi_n:= (H_0+\nu+E)^{-1}g_n\ge 0
$$
is nonzero and is a weak solution of $(H_0+\nu)\Phi=E\Phi$ on $X\setminus
\supp g_n$, in particular on the connected open subset $U_{n+2}$. Since
$(H_0+\nu+E)^{-1}$ is positivity improving, it follows that $
\|\Phi_n{\mathds{1}}_{U_1}\|_2 >0$. By multiplying with a positive constant we
may and will assume that $ \|\Phi_n {\mathds{1}}_{U_1}\|_2 =1$ for all
$n\in\NN$. We want to construct a suitably convergent subsequence of
$(\Phi_n)_{n\in\NN}$ so that the corresponding limit $\Phi$ is a positive weak
solution.

First note that by the Harnack principle, for fixed $R\in\NN$ we
know that
$$
\sup_{n\ge R} \|\Phi_n {\mathds{1}}_{U_R}\|_2 <\infty ,
$$
since all the corresponding $\Phi_n$ are nonnegative solutions on
$U_{R+2}$.  In particular, $(\Phi_n {\mathds{1}}_{U_R})_{n\in\NN}$ is bounded
in $L^2(X)$ and so has a weakly convergent subsequence. By a standard diagonal
argument, we find a subsequence, again denoted by $(\Phi_n)_{n\in\NN}$, so
that $\Phi_n {\mathds{1}}_{U_R}\to \Psi_R$ weakly in $L^2(X)$ for all
$R\in\NN$ and suitable $\Psi_R$. As multiplication with ${\mathds{1}}_{U_R}$
is continuous and hence also weak-weak continuous, there is $\Phi\in
L^2_{\text{loc}}(X)$ such that $\Psi_R=\Phi {\mathds{1}}_{U_R}$. We will now perform
some bootstrapping to show that the convergence is, in fact, much better than
just local weak convergence in $L^2$ which will imply that $\Phi$ is the
desired weak solution.

Since for fixed $R>0$ and $n\ge R$ the $\Phi_n$ are nonnegative solutions on $U_{R+2}$ the Caccioppoli inequality, cf \cite{BoutetdeMonvelLS-08} implies that
$$
\int_{U_R}d\Gamma(\Phi_n)\le C  \int_{U_{R+1}}\Phi_n^2dm
$$
is uniformly bounded w.r.t. $n\in\NN$. Combined with Leibniz rule and
Cauchy Schwarz inequality this directly gives that $\int_{U_R}d\Gamma(\psi
\Phi_n)$ is uniformly bounded w.r.t. $n\in \NN$ for every $\psi\in \DD$ with
$d\Gamma (\psi) \leq dm$ (see \cite{BoutetdeMonvelLS-08} as well). Therefore, by Lemma \ref{cutoff}, we can find
for suitable cut-off functions $\eta_R\in \DD\cap C_c(X)$ with
${\mathds{1}}_{U_R}\le\eta_R\le {\mathds{1}}_{U_{R+1}}$ such that  the sequence
$(\eta_R\Phi_n)$ is bounded in $(D, \|\cdot\|_\aaa)$.

The local compactness property implies that $ (\eta_R\Phi_n)$ has an
$L^2$-convergent subsequence. Using a diagonal argument again, we see that
there is a common subsequence, again denoted by $(\Phi_n)_{n\in\NN}$, such
that
$$
\Phi_n {\mathds{1}}_{U_R}\to \Phi {\mathds{1}}_{U_R}\mbox{  in  }L^2(X)\mbox{ as  }n\to\infty
$$
for all $R\in\NN$.

As a first important consequence we note that $\Phi\not=0$, since $\|\Phi {\mathds{1}}_{U_1}\|_2 =\lim_n\|\Phi_n {\mathds{1}}_{U_1} \|_2 =1$.

Another appeal to the Caccioppoli inequality gives that
$$
\int_{U_R}d\Gamma(\Phi_n-\Phi_k)\le C  \int_{U_{R+1}}(\Phi_n-\Phi_k)^2dm\to 0\mbox{  as  }n,k\to\infty .
$$
Therefore, by the same reasoning as above, for every $R\in\NN$ the sequence
$(\eta_R\Phi_n)$ converges in $(\DD, \|\cdot\|_\aaa)$. Since this convergence
is stronger than weak convergence in $L^2(X)$, its limit must be $\eta_R\Phi$,
so that the latter is in $\DD$. We have thus proven that $\Phi\in
\DD_{\text{loc}}(X)$. Moreover, we also find that
$$
\aaa[\Phi_n,\varphi]\to \aaa[\Phi,\varphi]\mbox{  for all  }\varphi\in \DD\cap C_c(X) ,
$$
(since, by strong locality, for every  cut-off function $\eta\in \DD\cap C_c(X)$ that is $1$ on $\supp\varphi$, we get
$$
\aaa[\Phi_n,\varphi]=\aaa[\eta\Phi_n,\varphi]\to
\aaa[\eta\Phi,\varphi]=\aaa[\Phi,\varphi].) $$

We will now deduce convergence of the potential term.  This will be done in
two steps. In the first step we infer convergence of the $\nu_-$ part from
convergence w.r.t. $\|\cdot\|_\aaa$ and the relative boundedness   of
$\nu_-$. In the second step, we use the fact that $\Phi$ is a weak solution
to reduce   convergence of the $\nu_+$ part to convergence
w.r.t. $\|\cdot\|_\aaa$ and convergence of the $\nu_-$ part. Here are the
details:

Consider cut-off functions $\eta_R$ for $R\in\NN$ as above. Due to convergence
in $(\DD, \|\cdot\|_\aaa)$, we know that there is a subsequence of
$(\eta_R\Phi_n)_{n\in\NN}$ that converges q.e., see \cite{Fukushima-80} and
the discussion in Section 1. One diagonal argument more will give a
subsequence, again denoted by $(\Phi_n)_{n\in\NN}$, such that the
$\tilde{\Phi}_n$ converge to $\tilde{\Phi}$ q.e., where $ \tilde{}$ denotes
the quasi-continuous representatives. Since $\nu$ is absolutely continuous
w.r.t capacity we now know that the $\tilde{\Phi}_n$ converge to
$\tilde{\Phi}$ $\nu$-a.e. Moreover, again due to convergence in $(\DD,
\|\cdot\|_\aaa)$, we know that $(\eta_R\tilde{\Phi_n})_{n\in\NN}$ is
convergent in $L^2(\nu_-)$ as $\nu_-\in\C{M}_{R,1}$. Its limit must coincide
with $\eta_R\tilde{\Phi}$, showing that $\tilde{\Phi}d\nu_-\in\C{M}_{R}$.

We now want to show the analogous convergence for $\nu_+$; we do so by
approximation and omit the $\tilde{}$ for notational simplicity.  By simple
cut-off procedures, every $\varphi\in \DD_c (X) \cap L^\infty (X)$ can be
approximated w.r.t. $\|\cdot\|_\aaa$ by a uniformly bounded sequence of
continuous functions in $\DD$ with common compact support.  Thus, the equation
$$
\aaa[\Phi,\varphi]+\nu[\Phi,\varphi]=E \cdot(\Phi|\varphi) ,
$$
initially valid for $\varphi\in \DD\cap C_c(X)$ extends to $\varphi\in \DD_c
(X) \cap L^\infty (X)$ by continuity.
 Therefore, for arbitrary $k\in\NN$, and $R< \min (n-2, m-2)$
\begin{small}\begin{eqnarray*}
 \int_{|\Phi_n-\Phi_m|\le k}(\Phi_n-\Phi_m)^2\eta_Rd\nu_+ &\le&
\int(\Phi_n-\Phi_m)\{(-k)\vee(\Phi_n-\Phi_m)\wedge k\}\eta_Rd\nu_+ \\
&=&
\nu_+[(\Phi_n-\Phi_m),\{(-k)\vee(\Phi_n-\Phi_m)\wedge k\}\eta_R]\\
&=&
E((\Phi_n-\Phi_m)|\{\ldots\}\eta_R)+\nu_-[(\Phi_n-\Phi_m),\{\ldots\}\eta_R)]\\
& & - \aaa[(\Phi_n-\Phi_m),\{(-k)\vee(\Phi_n-\Phi_m)\wedge k\}\eta_R]
\end{eqnarray*}               \end{small}
By what we already know about convergence in $\DD$, $L^2$ and $L^2(\nu_-)$, the RHS goes to zero as $n,m\to\infty$, independently of $k$.  This gives the desired convergence of $\eta_R\tilde{\Phi}_n$, the limit being $\eta_R\tilde{\Phi}$ since this is the limit pointwise.

Finally, an appeal to the Harnack principle gives that $\Phi$ is positive a.e. on every $U_R$ and, therefore, a.e. on X.
\end{proof}
\begin{remark}\label{cpt}
  That we have to assume that $X$ is noncompact can easily be seen by
  looking at the Laplacian on a compact manifold. In that situation
  any positive weak solution must in fact be in $L^2$ due to the
  Harnack principle. Thus the corresponding energy must lie in the
  spectrum. In fact, the corresponding energy must be the infimum of the spectrum as we will show in the next theorem.   The theorem  is standard. We include a proof for completeness reasons.
\end{remark}

\begin{theorem}
  Let $(\aaa,\DD)$ be a regular, strictly local, irreducible Dirichlet
  form, $H_0$ be the associated operator and
  $\nu\in\C{M}_{R,0}-\C{M}_{R,1}$.  Suppose that $X$ is compact and  $\aaa$ satisfies the
  local compactness property. Then, $H_0 + \nu$ has compact resolvent. In particular,   there exists a positive weak solution
  to $(H_0 + \nu) \Phi = E_0 \Phi$ for $E_0:=\inf \sigma (H_0 +\nu)$.  This solution is unique (up to a factor) and  belongs to $L^2
  (X)$. If $H_0 + \nu$ satisfies a Harnack principle, then  $E_0$ is the only value in $\RR$ allowing for a positive  weak solution.
\end{theorem}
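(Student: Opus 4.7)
The plan is to establish the three assertions of the theorem in sequence, using compactness of $X$ to globalize the hypotheses at each step.

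\emph{Compact resolvent.} First, I would apply the local compactness property with $U=X$ (which is its own relatively compact open subset). Combined with regularity of $\aaa$, this identifies $D_0(X)$ with $\DD$ and gives a compact embedding $\DD\hookrightarrow L^2(X)$. Since $\nu_-\in\mathcal{M}_{R,1}$, the norm $\|\cdot\|_\aaa$ is controlled by $\|\cdot\|_{\aaa+\nu}$ (up to a harmless constant shift), so the inclusion $D(\aaa+\nu)\hookrightarrow\DD$ is continuous and the composition $D(\aaa+\nu)\hookrightarrow L^2(X)$ is compact. Hence $H_0+\nu$ has compact resolvent and its spectrum is purely discrete and bounded below by $E_0$.

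\emph{Positive ground state.} The eigenvalue $E_0=\inf\sigma(H_0+\nu)$ therefore admits a nonzero eigenvector $\Phi_0\in D(H_0+\nu)\subset L^2(X)$, which is automatically a weak solution of $(H_0+\nu)\Phi_0=E_0\Phi_0$. Irreducibility of $\aaa$ transfers to $\aaa+\nu$ by the announced \cite{LenzSV-pre}, so the irreducibility characterization in Section~1 forces $e^{-t(H_0+\nu)}$ to be positivity improving. The classical Perron--Frobenius argument (cf.\ Reed--Simon XIII.43) then yields that $E_0$ is simple with a strictly $m$-a.e.\ positive eigenfunction, handling existence, $L^2$-membership, and uniqueness up to a factor in one stroke.

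\emph{Uniqueness of $E_0$ under the Harnack principle.} Let $\Phi$ be a positive weak solution at some $E\in\RR$. Since $X$ is compact, $\DD_{\text{loc}}(X)=\DD$ and $L^2_{\text{loc}}(X)=L^2(X)$, so $\Phi\in\DD\cap L^2(X)$ automatically. My strategy is to show that $\Phi\in D(\aaa+\nu)$; once this is known, the weak solution identity extends from $\varphi\in\DD\cap C_c(X)$ first to $\varphi\in\DD\cap L^\infty(X)$ via part~(4) of the remark following Definition~2.1, and then to all $\varphi\in D(\aaa+\nu)$ via Lemma~\ref{l:approximation}. This makes $\Phi$ a genuine eigenvector of $H_0+\nu$ at eigenvalue $E$, and Reed--Simon XIII.44 applied to the positivity improving semigroup then forces $E=E_0$ (and simultaneously $\Phi\propto\Phi_0$).

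\emph{Main obstacle.} The delicate point is verifying $\tilde\Phi\in L^2(\nu_+)$: the weak solution axiom provides only $\tilde\Phi d\nu\in\mathcal{M}_R$, i.e., $\tilde\Phi\in L^1(|\nu|)$, which is strictly weaker. To bridge this gap I would invoke the Harnack principle on the connected relatively compact open set $X$ itself, together with a chain-of-balls argument in the spirit of the remark preceding Theorem~\ref{thm3.3}, to obtain essential boundedness of $\Phi$. On compact $X$ the measure $\nu_+$ is a finite Radon measure, so $\Phi\in L^\infty(X)$ immediately yields $\tilde\Phi\in L^2(\nu_+)$ and closes the argument.
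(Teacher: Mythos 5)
Your first two steps are sound and essentially reproduce the paper's argument: compactness of the form-domain embedding (equivalently, the min--max argument the paper uses) gives empty essential spectrum, and irreducibility of $\aaa+\nu$ together with the Perron--Frobenius theory for positivity improving semigroups gives existence, strict positivity, $L^2$-membership and simplicity of the ground state. Your observation that on compact $X$ one has $\DD_{\text{loc}}(X)=\DD$ and $L^2_{\text{loc}}(X)=L^2(X)$ is also correct and is a useful sharpening of the paper's Remark \ref{cpt}.

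The gap is in your last step, exactly at the point you flag as the main obstacle. The Harnack \emph{principle} as defined in the paper (part (2) of the definition in Section 3) is purely an $L^2$ statement: it propagates uniform $L^2$ bounds of a \emph{sequence} of nonnegative solutions from a set of positive measure to compact subsets. It carries no pointwise information whatsoever, so it cannot yield essential boundedness of $\Phi$; local boundedness is part of the Harnack \emph{inequality} (part (1) of that definition), a strictly stronger hypothesis which is assumed only in Corollary \ref{c3.7}, not in this theorem. Consequently your bridge to $\tilde\Phi\in L^2(\nu_+)$, hence to $\Phi\in D(\aaa+\nu)$ and to $\Phi$ being an eigenfunction, does not follow from the stated hypotheses. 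The paper's own (admittedly terse) route is different: it uses the Harnack hypothesis only through Remark \ref{cpt}, namely that a positive weak solution lies in $L^2$ and therefore, by a Schnol-type argument in the spirit of \cite{BoutetdeMonvelLS-08}, the corresponding energy must lie in the spectrum, which here is purely discrete; thus $E$ is an eigenvalue, and two distinct eigenvalues cannot both carry a.e.\ positive (eigen)solutions by orthogonality against the positive ground state. If you prefer to keep your route via membership in the form domain, you must either strengthen the hypothesis to a Harnack inequality (so that $\Phi,\ \tilde\Phi$ are locally, hence globally, bounded) or give an independent argument for $\tilde\Phi\in L^2(\nu_+)$; the Harnack principle alone will not do it.
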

\begin{proof} As $X$ is compact, the local compactness property gives that the operator associated to $\aaa$ has compact resolvent.
In particular, the sequence of eigenvalues of $H_0$ is given by the minmax principle and tends to $\infty$. As $\nu_+$ is a nonnegative operator   and $\nu_-$ is form bounded with bound less than one, we can apply the minmax principle to $H_0 + \nu$ as well to obtain empty essential spectrum.

In particular, the infimum of the spectrum is an eigenvalue.
By irreducibility and abstract principles, see e.g.
\cite{ReedS-78}, XIII.12, the corresponding  eigenvector must have constant sign and if a Harnack principle holds then  any other energy
allowing for a positive weak solution must  be an eigenvalue as
well (as discussed in the previous remark). As there can not be two different eigenvalues with positive solutions, there can not be another energy with a positive weak solution.
\end{proof}

Combining the results for the  compact and noncompact case we get:

\begin{corollary}\label{c3.5}
 Let $(\aaa,\DD)$ be a regular, strictly local, irreducible Dirichlet form, $H_0$ be the associated operator and $\nu\in\C{M}_{R,0}-\C{M}_{R,1}$.
  Suppose that $\aaa$ satisfies the local compactness property and  $H_0+\nu$ satisfies the Harnack principle for all $E\in\RR$. Then,
$$
\inf \sigma (H_0+\nu)\le \sup\{ E\in\RR| \exists \mbox{ a.e. positive weak solution }(H_0+\nu)\Phi=E\Phi\} .
$$
\end{corollary}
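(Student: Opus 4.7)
The plan is to split into the two cases already handled by the preceding results, namely $X$ compact and $X$ noncompact, and observe that in each case the set on the right hand side has supremum at least $\inf\sigma(H_0+\nu)$.

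First I would treat the noncompact case. Here Theorem \ref{thm3.3} applies directly: for every $E<\inf\sigma(H_0+\nu)$ the hypotheses of that theorem are in force (irreducibility, strict locality, local compactness, noncompactness of $X$, and the Harnack principle at $E$, which we are assuming to hold for all $E\in\RR$). Hence an a.e.\ positive weak solution of $(H_0+\nu)\Phi=E\Phi$ exists for each such $E$. Taking the supremum over $E<\inf\sigma(H_0+\nu)$ then yields
\[
\sup\{E\in\RR\mid \exists\text{ a.e. positive weak solution of }(H_0+\nu)\Phi=E\Phi\}\;\ge\;\inf\sigma(H_0+\nu),
\]
which is the desired inequality.

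Next I would treat the compact case. Here the previous theorem applies and gives, at $E_0:=\inf\sigma(H_0+\nu)$ itself, the existence of an a.e.\ positive weak solution (which happens to lie in $L^2(X)$). In particular $E_0$ itself belongs to the set on the right hand side, so its supremum is at least $E_0=\inf\sigma(H_0+\nu)$.

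Combining the two cases exhausts all possibilities (since $X$ is either compact or noncompact) and gives the stated inequality. There is no real obstacle: the whole content of the corollary is packaged in Theorem \ref{thm3.3} and the preceding compact theorem, and the corollary merely records the resulting uniform bound on the supremum of energies admitting positive weak solutions.
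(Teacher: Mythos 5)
Your proposal is correct and matches the paper's own reasoning: the corollary is stated there precisely as the combination of Theorem \ref{thm3.3} (noncompact case, giving positive weak solutions for every $E<\inf\sigma(H_0+\nu)$) with the preceding compact-case theorem (giving a positive solution at $E_0=\inf\sigma(H_0+\nu)$ itself). Nothing further is needed.
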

This doesn't settle the existence of a positive weak solution for the groundstate energy $\inf \sigma (H_0+\nu)$ in the noncompact case. The uniform Harnack principle settles this question:

\begin{theorem} Let $(\aaa,\DD)$ be a regular, strictly local, irreducible Dirichlet form, $H_0$ be the associated operator,  $\nu\in\C{M}_{R,0}-\C{M}_{R,1}$.
  Suppose that $\aaa$ satisfies the local compactness property and $H_0+\nu$ satisfies the uniform Harnack principle. Then there is an a.e. positive weak
solution of $(H_0+\nu)\Phi=E\Phi$ for  $E= \inf \sigma (H_0+\nu)$.
\end{theorem}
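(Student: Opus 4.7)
The plan is to mimic the proof of Theorem \ref{thm3.3}, but now for a sequence of energies $E_n \uparrow E$ instead of a single fixed energy strictly below the spectrum. The role of the ordinary Harnack principle in that earlier argument will be played by the \emph{uniform} Harnack principle, which is precisely what allows us to control solutions at varying energies on a common interval.

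First I would pick a strictly increasing sequence $E_n \uparrow E$ with $E_n < E = \inf \sigma(H_0+\nu)$, all lying in the bounded interval $I := [E_1, E]$. By Theorem \ref{thm3.3} each $E_n$ admits an a.e.\ positive weak solution $\Phi_n$ of $(H_0+\nu)\Phi_n = E_n \Phi_n$. Exactly as in the proof of Theorem \ref{thm3.3}, write $X=\bigcup_{R\in\NN} U_R$ with $U_R$ connected, relatively compact and open, with $\overline{U_R}\subset U_{R+1}$, and normalise so that $\|\Phi_n \mathds{1}_{U_1}\|_2 = 1$ for all $n$. Since all $E_n \in I$ and each $\Phi_n$ is a nonnegative weak solution on $U_{R+2}$ at energy $E_n$, the uniform Harnack principle gives
\[
\sup_{n\ge R} \|\Phi_n \mathds{1}_{U_R}\|_2 < \infty
\]
for every fixed $R$. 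This is the single new ingredient compared to Theorem \ref{thm3.3}; everything else is a repetition.

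Next I would run the same compactness/diagonal extraction. Weak $L^2_{\text{loc}}$-precompactness and a diagonal argument yield, after passing to a subsequence, $\Phi_n \mathds{1}_{U_R}\to \Phi \mathds{1}_{U_R}$ weakly in $L^2$ for every $R$, for some $\Phi\in L^2_{\text{loc}}(X)$ with $\|\Phi\mathds{1}_{U_1}\|_2 = 1$. The Caccioppoli inequality, combined with the Leibniz rule and Cauchy--Schwarz as in Theorem \ref{thm3.3}, gives uniform bounds for $\eta_R \Phi_n$ in $(\DD,\|\cdot\|_{\aaa})$ for suitable cut-offs $\eta_R$ from Lemma \ref{cutoff}, and then the local compactness property upgrades weak $L^2_{\text{loc}}$-convergence to norm convergence. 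A second Caccioppoli estimate applied to the differences $\Phi_n-\Phi_k$ shows that $(\eta_R \Phi_n)$ is Cauchy in $(\DD,\|\cdot\|_{\aaa})$, so $\Phi\in \DD_{\text{loc}}(X)$ and $\aaa[\Phi_n,\varphi]\to \aaa[\Phi,\varphi]$ for all $\varphi\in \DD\cap C_c(X)$. The $\nu_-$ part is handled via a further subsequence along which $\tilde\Phi_n \to \tilde\Phi$ quasi-everywhere, hence $\nu$-a.e., combined with the relative form boundedness $\nu_-\in \mathcal{M}_{R,1}$; the $\nu_+$ part follows by the same bootstrapping argument used in Theorem \ref{thm3.3}, based on the weak-solution identity at energy $E_n$.

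The only point where the varying energy enters is the right-hand side: for $\varphi\in \DD\cap C_c(X)$,
\[
E_n (\Phi_n \mid \varphi) \longrightarrow E (\Phi \mid \varphi),
\]
which is immediate from $E_n \to E$ and from $L^2$-convergence of $\Phi_n$ to $\Phi$ on $\supp\varphi$. Passing to the limit in
\[
\aaa[\Phi_n,\varphi] + \nu[\Phi_n,\varphi] = E_n (\Phi_n \mid \varphi)
\]
then shows that $\Phi$ is a weak solution of $(H_0+\nu)\Phi = E \Phi$. Finally, $\Phi\ge 0$ as a weak-$L^2_{\text{loc}}$ limit of nonnegative functions and $\Phi\not\equiv 0$ since $\|\Phi\mathds{1}_{U_1}\|_2=1$. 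A last appeal to the Harnack principle at energy $E$ (which is implied by the uniform Harnack principle) rules out vanishing on sets of positive measure inside any $U_R$ and yields that $\Phi > 0$ almost everywhere on $X$.

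The main obstacle I anticipate is bookkeeping rather than conceptual: verifying that the uniform Harnack principle really delivers the bound $\sup_n \|\Phi_n \mathds{1}_{U_R}\|_2<\infty$ in a form suitable for feeding into the Caccioppoli/local compactness machinery, and checking that the $\nu_+$ bootstrap still closes when the energies vary (the cancellation in the test-function computation must now absorb the extra term $(E_n-E_m)(\Phi_n-\Phi_m \mid \{\cdots\}\eta_R)$, which is controlled by $|E_n-E_m|\to 0$ together with the $L^2_{\text{loc}}$ bounds already established).
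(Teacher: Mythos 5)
Your proposal follows essentially the same route as the paper's own proof: pick $E_n \uparrow E$, take the a.e.\ positive solutions from Theorem \ref{thm3.3}, normalise on $U_1$, use the \emph{uniform} Harnack principle (all $E_n$ in one bounded interval) to get the locally uniform $L^2$ bounds, and rerun the Caccioppoli/local-compactness/diagonal extraction and the $\nu_-$, $\nu_+$ bootstrap to pass to the limit; your explicit treatment of the extra term coming from the varying energies is exactly the "analogous argument" the paper leaves implicit. The only point you skip is the reduction to noncompact $X$: your very first step invokes Theorem \ref{thm3.3}, which assumes noncompactness, so for compact $X$ you should instead cite the preceding theorem (compact resolvent, positive ground state at $E_0=\inf\sigma(H_0+\nu)$), which is how the paper opens its proof with "it suffices to consider the noncompact case."
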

\begin{proof} It suffices to consider the case of noncompact $X$.
 Take a sequence $(E_n)$ increasing to $E=\inf \sigma (H_0+\nu)$. From Theorem \ref{thm3.3} we know that there is an a.e. positive solution $\Psi_n$ of
$(H_0+\nu)\Phi=E_n\Phi$. We use the exhaustion $(U_R)_{R\in\NN}$ from the proof of Theorem \ref{thm3.3} and assume that
$$
\|\Psi_n {\mathds{1}}_{U_1} \|_2 =1\mbox{  for all  }n\in\NN .
$$
As in the proof of Theorem \ref{thm3.3} we can now show that we can
pass to a subsequence such that $(\eta_R\Psi_n )$ converges in $\DD$,
$L^2(m)$ and $L^2(\nu_+ + \nu_-)$ for every $R\in\NN$. The crucial point is
that the uniform Harnack principle gives us a control on $\|
\eta_R\Psi_n \|_2$, uniformly in $n$, due to the norming condition
above. With aruments analogous to those in the proof of Theorem
\ref{thm3.3}, the assertion follows.
\end{proof}
Note that Corollaries \ref{c2.4} and \ref{c3.5} together almost give
$$
\inf \sigma (H_0+\nu) = \sup\{ E\in\RR| \exists \mbox{ a.e. positive weak solution }(H_0+\nu)\Phi=E\Phi\} .
$$
The only problem is that for the ``$\ge$'' from Corollary \ref{c2.4} we
would have to replace a.e. positive by a.e. positive and
$\Phi, \Phi^{-1}\in L^\infty_{\text{loc}}$. This, however, is fulfilled whenever  a Harnack inequality holds.

\begin{corollary}\label{c3.7}
 Let $(\aaa,\DD)$ be a regular, strictly local, irreducible Dirichlet form, $H_0$ be the associated operator and $\nu\in\C{M}_{R,0}-\C{M}_{R,1}$.
  Suppose that $\aaa$ satisfies the local compactness property and  $H_0+\nu$ satisfies a Harnack inequality for all $E\in\RR$. Then,
$$
\inf \sigma (H_0+\nu) = \sup\{ E\in\RR| \exists \mbox{ a.e. positive weak solution }(H_0+\nu)\Phi=E\Phi\} .
$$
\end{corollary}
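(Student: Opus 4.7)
The strategy is to establish both inequalities separately, reading off the nontrivial ``$\le$'' directly from Corollary \ref{c3.5} and deducing ``$\ge$'' from Corollary \ref{c2.4} once the Harnack inequality is used to upgrade mere a.e. positivity into the local two-sided boundedness required by that corollary.

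For the inequality ``$\le$'', I would first observe that a Harnack inequality for a given energy $E$ implies the Harnack principle for $E$: if $(\Phi_n)$ is a sequence of nonnegative weak solutions of $(H_0+\nu)\Phi = E\Phi$ on a relatively compact connected open $U$ and $\sup_n \|\Phi_n \mathds{1}_A\|_2<\infty$ for some $A\subset U$ of positive measure, then the Harnack inequality on a chain of balls covering $\overline{A}$ and a compact $K \subset U$ gives a uniform bound $\|\Phi_n \mathds{1}_K\|_\infty \le C \operatorname{essinf}_A \Phi_n$, and the right hand side is controlled by $\|\Phi_n \mathds{1}_A\|_2/m(A)^{1/2}$. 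Hence the hypotheses of Corollary \ref{c3.5} are met for every $E\in\RR$, and that corollary yields $\inf\sigma(H_0+\nu)\le \sup\{E\in\RR \mid \exists\ \text{a.e.\ positive weak solution of } (H_0+\nu)\Phi=E\Phi\}$.

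For the inequality ``$\ge$'', let $E\in\RR$ be such that some a.e.\ positive $\Phi$ solves $(H_0+\nu)\Phi=E\Phi$. The Harnack inequality at energy $E$ applied to $\Phi$ on any relatively compact connected open $U$ gives, via a standard chain of balls argument, that $\operatorname{esssup}_K \Phi \le C_K \operatorname{essinf}_K \Phi$ for every compact $K\subset U$; since $\Phi$ is a.e. positive, $\operatorname{essinf}_K \Phi >0$ (otherwise $\Phi$ would vanish on a set of positive measure in $K$, forcing $\Phi\equiv 0$ via the Harnack principle, contradicting positivity). Thus $\Phi, \Phi^{-1}\in L^\infty_{\text{loc}}(X)$, and Corollary \ref{c2.4} applies to give $\aaa+\nu \ge E$, i.e.\ $E \le \inf\sigma(H_0+\nu)$. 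Taking the supremum over such $E$ yields the reverse inequality.

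The only genuinely non-routine step is verifying that the a.e. positive weak solution is locally bounded away from zero: this is where the Harnack inequality (as opposed to just the Harnack principle) is indispensable, since the principle alone does not rule out $\operatorname{essinf}_K \Phi = 0$ on compacta $K$. Everything else is a direct appeal to the two previously proven corollaries.
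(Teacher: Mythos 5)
Your proposal is correct and takes essentially the same route as the paper: there the corollary is obtained precisely by combining Corollary \ref{c3.5} (using the observation, already made after the definition of the Harnack principle, that a Harnack inequality implies the Harnack principle via a chain-of-balls argument) with Corollary \ref{c2.4}, the Harnack inequality supplying the missing local bounds $\Phi,\Phi^{-1}\in L^\infty_{\text{loc}}$. One tiny remark: your parenthetical justification that $\operatorname{essinf}_K\Phi=0$ would force $\Phi$ to vanish on a set of positive measure is not immediate from the essinf alone, but it is also unnecessary, since the chain-of-balls bound $\operatorname{esssup}_K\Phi\le C_K\operatorname{essinf}_K\Phi$ you have just invoked would give $\Phi\equiv 0$ a.e.\ on $K$, directly contradicting a.e.\ positivity.
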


\section{Examples and applications}
We discuss several different  types of operators to which our results can be applied.
Parts of the implications have been known before.
However, previous proofs dealt with
each of the mentioned operators separately, while we have a uniform argument of proof.

\subsection*{Examples}
Classical examples of operators for which our results have been known before
can be found in \cite{Allegretto-74, Allegretto-79, Allegretto-81,MossP-78,Piepenbrink-74,Piepenbrink-77,CyconFKS-87}.
They concern Schr\"odinger operators and, more generally, symmetric elliptic second
order differential operators on unbounded domains in $\RR^d$, whose coefficients satisfy certain regularity conditions.
For Laplace-Beltrami operators on Riemannian manifolds
the Allegretto-Piepenbrink theorem  has been stablished in \cite{Sullivan-87b}.
\medskip

Here we want to concentrate on two classes of examples which have attracted attention more recently:
Hamiltonians with singular interactions and quantum graphs.

\subsubsection*{Hamiltonians with singular interactions}
These are operators acting on $\RR^d$ which may be formally written as $H = -\Delta - \alpha \delta(\cdot - M)$
where $\alpha$ is a positive real and $M \subset \RR^d$ is a manifold of codimension one satisfying certain regularity conditions, see e.g.\,\cite{BrascheEKS-94}
or Appendix K of \cite{AlbeverioGHH-04}.  In fact, the delta interaction can be given a rigorous interpretation as a measure $\nu_M$ concentrated on
the manifold  $M$. More precisely, for any Borel set $B\subset \RR^d$,  one sets $\nu_M(B) :=\vol_{d-1}(B \cap M)$
where $\vol_{d-1}$ denotes the $(d-1)$-dimensional Hausdorff measure on $M$.
In \cite{BrascheEKS-94}, page 132, one can find suitable regularity conditions on $M$
under which the measure $\nu_\gamma$ belongs to  the class $\mathcal{M}_{R,1}$.
Thus the singular interaction operator $H$ falls into our general framework, cf.~Remark  \ref{r:Hansen}.

If $M$ is a $C^2$-regular, compact curve  in $\RR^2$ the
essential spectrum of $H$ equals $\sigma_{ess}(-\Delta) = [0,\infty)$, cf.~\cite{BrascheEKS-94}.
On the other hand, the bottom of the spectrum of $H$ is negative and consists consequently of an eigenvalue.
This can be seen using the proof of Corollary 11 in \cite{Brasche-03}.
In Section 3 of  \cite{Exner-05} it has been established
that the ground state is nondegenerate and the corresponding eigenfunction strictly positive.
This corresponds to part of our Theorem 3.3.

\subsubsection*{Quantum graphs}
Quantum graphs are given in terms of a metric graph $X$ and a Laplace (or more generally) Schr\"odinger operator $H$
defined on the edges of $X$ together with a set of (generalised) boundary conditions
at the vertices which make $H$ selfadjoint.
To make sure that we are dealing with a strongly-local Dirichlet form
we restrict ourselves here to the case of so called free or Kirchoff boundary conditions.
A function in the domain of the corresponding quantum graph Laplacian $H_0$ is continuous at each vertex
and  the boundary values of the derivatives obtained by approaching the vertex along incident edges
sum up to zero. Note that any non-negative Borel measure on $X$  belongs to the class $\mathcal{M}_{R,0}(X)$.
For $ \nu_+ \in \mathcal{M}_{R,0}(X)$  and $ \nu_-\in \mathcal{M}_{R,1}(X)$ the
quantum graph operator $H= H_0 + \nu_+-\nu_-$ falls into our framework.

See Section 5 of \cite{BoutetdeMonvelLS-08}  for a more detailed discussion of the
relation between Dirichlet forms and quantum graphs.

\subsection*{Applications}

The ground state transformation which featured in Theorem \ref{t:transformation} and Corollary \ref{c:transformation}
can be used to obtain a formula for the lowest spectral gap. To be more precise
let us assume that $\mathcal{E}$, $\nu$ and $\Phi$ satisfy the conditions of Theorem \ref{t:transformation}
with $U=X$.
Assume in addition that $\Phi$ is in $\mathcal{D} (\cE+ \nu)$.
Then $\Phi$ is an eigenfunction of $H$ corresponding to the eigenvalue $E = \min \sigma (H)$.
We denote by
\[
 E':= \inf \{ \aaa[u,u] + \nu[u,u] \mid u \in \DD, \|u \|=1, u \perp \Phi \} \,
\]
the second lowest eigenvalue below the essential spectrum of $H$,
or, if it does not exist, the bottom of $\sigma_{ess}(H)$.
Then we obtain the following formula
\begin{equation}\label{e:gap-formula}
 E' -E = \inf_{ \{ u \in \DD (\cE+\nu) , \|u \|=1, u \perp \Phi \}} \int_X \Phi^2 d\Gamma(u \Phi^{-1},u \Phi^{-1}) \
\end{equation}
which determines the lowest spectral gap. It has been used in \cite{KirschS-87,KondejV-06a,Vogt}
to derive lower bounds on the distance between the two lowest eigenvalues of different classes of Schr\"odinger operators
(see \cite{SingerWYY-85} for a related approach). In \cite{KirschS-87} bounded potentials are considered, in
\cite{KondejV-06a} singular interactions along curves in $\RR^2$ are studied, and \cite{Vogt}
generalises these results using a unified approach based on Kato-class measures.

If for a subset  $U \subset X$ of positive measure and a function $u \in \{ u \in \DD, \|u \|=1, u \perp \Phi \}$
the non-negative measure $\Gamma(u\Phi^{-1},u\Phi^{-1})$ is absolutely continuous with respect to $m$, one can exploit formula
\eqref{e:gap-formula} to derive the following estimate
(cf.{} Section~3 in \cite{Vogt}, and \cite{KirschS-87,KondejV-06a} for similar bounds).
Denote by $\gamma(u\Phi^{-1})= \frac{d\Gamma(u\Phi^{-1},u\Phi^{-1})}{dm}$ the Radon-Nykodim derivative. Then
\[
\int_U \Phi^2 d\Gamma(u \Phi^{-1},u \Phi^{-1})
\ge \frac{1}{m(U)} \inf_U \Phi^2
  \left( \int_U \sqrt{\gamma(u\Phi^{-1})}dm\right)^2 \,
\]
In specific situations one can chose $u$ to be an eigenfunction associated to the second eigenvalue $E'$ and use
geometric properties of $\Phi$ and $u$ to derive explicit lower bounds on the spectral gap.
\bigskip

Other uses of the ground state transformation include the study of $L^p$-$L^q$ mapping properties of the semigroup associated to $\aaa$
\cite{DaviesS-84}
and the proof of Lifschitz tails \cite{Mezincescu-87}.\\[5mm]

\textbf{Acknowledgements.}
We would like to thank J.~Brasche and N.~Peyerimhoff for helpful discussions
concerning the previous literature.

\def\cprime{$'$}\def\polhk#1{\setbox0=\hbox{#1}{\ooalign{\hidewidth
  \lower1.5ex\hbox{`}\hidewidth\crcr\unhbox0}}}

\Addresses
\end{document}